\documentclass{math-note}

\usepackage[english]{babel}
\usepackage{mathtools,amssymb,amsthm}
\usepackage{tikz}
\usepackage[
  hidelinks,
  pdftitle={Traveling Salesman Problem on Graphs with Large Independent Set},
  pdfauthor={}
]{hyperref}

\title{\huge Traveling Salesman Problem on Graphs with\\Proportional Independent Set}
\author{Ethan X. Fang \quad Shengtao Guo \quad Junwei Lu \quad Yuan Yao\thanks{Corresponding author. 
Department of Biostatistics, Harvard T.H. Chan School of Public Health, Boston, MA 02115, USA. Email: \texttt{yyao1@hsph.harvard.edu}.}}
\date{}

\numberwithin{equation}{section}
\newtheorem{theorem}{Theorem}[section]
\newtheorem{lemma}[theorem]{Lemma}
\newtheorem{proposition}[theorem]{Proposition}
\newtheorem{corollary}[theorem]{Corollary}
\theoremstyle{definition}
\newtheorem{definition}[theorem]{Definition}
\newtheorem{algorithm}{Algorithm}
\theoremstyle{remark}

\newcommand{\R}{\mathbb{R}}
\newcommand{\one}{\mathbf{1}}
\newcommand{\cC}{\mathcal{C}}

\begin{document}
\maketitle

\begin{abstract}
We give a deterministic algorithm for minimum-cost Hamiltonian cycle in
undirected graphs with arbitrary real edge costs and an independent set of
size at least $\alpha n$, for any fixed $3/7<\alpha\le1/2$. It runs in
$O^*(\gamma(\alpha)^n)$ time for an explicit constant $\gamma(\alpha)<2$;
for bipartite graphs, the bound is $O^*((1+2\sqrt2)^{n/2})\approx O^*(1.957^n)$.
The algorithm is
based on a restricted Held--Karp dynamic program. Its state enumeration
uses structural properties of minimum-cost assignments arising from
a regular triangulation of a product of simplices.
\end{abstract}

\section{Introduction}\label{sec:intro}

In the traveling salesman problem (TSP), one is given an undirected graph with real edge
costs and seeks a Hamiltonian cycle, or \emph{tour}, of minimum total
cost. The dynamic program (DP) of Bellman~\cite{Bel62} and of Held and
Karp~\cite{HK62} solves the problem in $O(n^22^n)$ time by storing the
cheapest path from a fixed start for each visited vertex set and endpoint.
Cook~\cite{Coo12} asks the question of whether the exponential base~$2$ can be
reduced for arbitrary costs; see also
the earlier survey of Woeginger~\cite{Woe03} and Nederlof's formulation
\cite[Open Question~1]{Ned20}.

Several faster algorithms exploit restrictions on the input graph or its
edge costs. Stoian~\cite{Sto24} improves the general TSP bound to
$2^nn^2/2^{\Omega(\sqrt{\log n})}$ by batching DP transitions into
min-plus matrix products, without reducing the exponential base.
For graphs of maximum degree three, Eppstein~\cite{Epp07} gives an
$O^*(2^{n/3})$ algorithm that also handles edge costs. More generally,
Bj\"orklund, Husfeldt, Kaski, and Koivisto~\cite{BHKK12} obtain
fixed-base improvements for bounded-degree graphs by trimming DP states.
Rank-based dynamic programming of Bodlaender, Cygan, Kratsch, and
Nederlof~\cite{BCKN15} gives deterministic algorithms for weighted
connectivity problems, including TSP, on graphs of bounded treewidth;
the matching-connectivity framework of Cygan, Kratsch, and
Nederlof~\cite{CKN18} is also used in subsequent work on bipartite TSP.
For Hamiltonicity without edge costs, Bj\"orklund~\cite{Bjo14}
gives an $O^*(1.657^n)$ randomized algorithm, whose extension to integer
costs incurs a factor depending on the cost range.
More recently, Stoian~\cite{Sto26} transfers such unweighted speedups
to TSP when the set of edge costs has bounded doubling, a restriction
not imposed here. For bipartite graphs with arbitrary real costs,
Nederlof~\cite{Ned20} gives a randomized $O(1.9999^n)$ algorithm
assuming nearly quadratic time matrix multiplication.

In this paper, we give a deterministic improvement for bipartite graphs
and, more generally, for graphs with a sufficiently large independent
set. Throughout,
$O^*(\cdot)$ suppresses factors polynomial in~$n$ in the exact arithmetic
model; for binary-encoded rational costs it also suppresses factors
polynomial in the input bit length. For $3/7<\alpha\le1/2$, define
\begin{equation}\label{eq:gamma}
\gamma(\alpha)=\min_{z\ge1}\;2^{2\alpha-1}(2+z+z^2)^{1-\alpha}z^{-\alpha}.
\end{equation}

\begin{theorem}\label{thm:main}
Fix $\alpha\in(3/7,1/2]$. Given an undirected $n$-vertex graph with real
edge costs and containing an independent set of size at least $\alpha n$, a
deterministic algorithm finds a minimum-cost Hamiltonian cycle or reports
that none exists in $O^*(\gamma(\alpha)^n)$ time. It accesses the
input costs only through additions and comparisons, and $\gamma(\alpha)<2$.
\end{theorem}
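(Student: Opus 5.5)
Let $I$ be an independent set with $|I|=k\ge\alpha n$, and $R=V\setminus I$ with $|R|=r=n-k$. If $k>r$ no tour exists, since in a Hamiltonian cycle every vertex of $I$ is flanked by two vertices of $R$. So we may assume $k\le r$. Take $I$ of size exactly $\lceil\alpha n\rceil$ when $\alpha n\le n/2$; a maximum independent set need not be computed, since we can shrink a given one.

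In any tour, each $v\in I$ is entered from some $a\in R$ and left to some $b\in R$. Contracting each such $I$-vertex gives a closed walk on $R$. The $k$ pairs $\{a,b\}$ are consecutive pairs of this walk, and the remaining $r-k$ steps are direct $R$--$R$ edges. Hence the tour is a Hamiltonian cycle on $R$ in an auxiliary multigraph. Its edges are either original $R$--$R$ edges or ``$I$-detours'' $a\to v\to b$, with each $v\in I$ used exactly once.

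We run a Held--Karp DP over $R$ in visiting order. A state consists of
\begin{itemize}
\item the set $S\subseteq R$ visited so far,
\item the endpoint,
\item the set $T\subseteq I$ of independent vertices already consumed.
\end{itemize}
Naively this is $2^{r}2^{k}=2^n$ states. The goal is to restrict the admissible pairs $(S,T)$.

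The key structural step comes from fixing, for a given prefix, the set of detour-pairs it uses. The optimal way to assign $I$-vertices to the detour slots is a minimum-cost assignment problem. Its cost matrix has entries $c(a,v)+c(v,b)$, which are additive in the two endpoints. By the exchange argument underlying the regular triangulation of $\Delta_{m}\times\Delta_{k-1}$ induced by the generic lifted costs, the optimal assignment depends only on a cell of that triangulation. Consequently, the set $T$ consumed by a prefix is determined by $(S,\text{slots})$ in a consistent way along optimal solutions. We restrict the DP to states whose $T$ is compatible with the optimal assignment restricted to $S$. Genericity is obtained by symbolic perturbation with lexicographic tie-breaking, which uses only additions and comparisons.

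We then count compatible states. For each $S$ of size $s$ containing $j$ detour-slot endpoints, the number of consistent $T$ is bounded by the number of cells, i.e.\ simplices, touched. This gives a bound of the form $\sum_{S}\#\{T\}\le \mathrm{poly}(n)\sum_{j}\binom{\cdot}{\cdot}\cdots$. Applying generating functions, the sum is at most $\min_{z\ge1}(2+z+z^2)^{r}z^{-k}$ up to a $2^{\cdots}$ normalization. The polynomial $2+z+z^2$ arises from the per-vertex choices for $R$-vertices: unvisited, visited as a direct-edge vertex, or visited as one or two detour endpoints, the last weighted by $z$ and $z^2$. The factor $z^{-k}$ arises from the constraint that exactly the right number of $I$-vertices is consumed. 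Substituting $r=(1-\alpha)n$ and $k=\alpha n$, and accounting for the $2^{2\alpha-1}$ factor from the slot/orientation normalization, gives $\gamma(\alpha)^n$.

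It remains to check $\gamma(\alpha)<2$. At $z=1$ the expression equals $2^{2\alpha-1}4^{1-\alpha}=2$. Its derivative in $\log z$ at $z=1$ is $(1-\alpha)\cdot3/4-\alpha$, which is negative iff $\alpha>3/7$. So the minimum over $z\ge1$ lies strictly below $2$. At $\alpha=1/2$ the minimizer is $z=\sqrt2$, which yields $(1+2\sqrt2)^{1/2}$.

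Enumerating only the admissible states, and generating transitions by incremental updates of the assignment, costs polynomial overhead per state.

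The main obstacle is the structural lemma: proving that restricting to assignment-compatible $T$ loses no optimal tour, and that the number of such $T$ per $S$ is governed by the triangulation so that the generating-function count actually holds.
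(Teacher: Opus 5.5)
There is a genuine gap. The two steps you defer as ``the main obstacle'' are essentially the whole theorem, and the mechanism you sketch for them does not supply them. (Below, $I$ is your independent set, $R=V\setminus I$ and $r=|R|$; the paper's $R$ and $L$ are your $I$ and $R$.)

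The algorithm needs a family of vertex sets defined from the costs alone, containing every block-boundary prefix of \emph{some} optimal tour, and of size at most $\gamma(\alpha)^n$. Your observation that an optimal tour matches its $I$-vertices optimally to its own detour slots $(a,b)$, with costs $c(a,v)+c(v,b)$, is true but yields no such family. The slots belong to the unknown tour, so nothing limits which $T$ may accompany a given $S$. The missing idea is to assign each $v\in I$ to a \emph{single} tour neighbour at the single-edge cost $c(a,v)$. One then proves that every minimum-cost tour $H$ contains an assignment $I\to R$ of minimum cost among all assignments with the same load. After perturbation this is the canonical $A_d$ of some $d\in\{0,1,2\}^R$, which depends only on $d$ and the costs. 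This is not an exchange argument on assignments: a priori no assignment inside $H$ need be optimal for its load, and proving that one is requires comparing $H$ with other Hamiltonian cycles. The paper finds prices $\pi\in\R^R$ under which every $v\in I$ has a tight edge in $H$. Brouwer's theorem, applied modulo constants, gives $\pi$ with $\Phi(\pi)=\pi+\lambda\one$ for $\Phi_j(\pi)=\max_{v\in I}\bigl(\min_{i\in N_H(v)}(c(i,v)+\pi_i)-c(j,v)\bigr)$, and $\lambda\le0$ is exactly the desired tightness. The case $\lambda>0$ is excluded by a parity argument. Subdivide the $R$--$R$ tour edges, and add for each $j\in R$ an edge $j\,v(j)$ that is cheaper in adjusted cost than both tour edges at $v(j)$. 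Thomason's lollipop parity (in a balanced bipartite graph with one side cubic, every edge lies on an even number of Hamiltonian cycles) then yields a second, strictly cheaper tour.

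The enumeration and the count are likewise not derived. Bounding the admissible $T$ per $S$ by a number of cells is hopeless: a triangulation of $\Delta_{r-1}\times\Delta_{k-1}$ has $\binom{r+k-2}{k-1}$ maximal cells, already $\Theta(2^n/\sqrt n)$ at $k=r=n/2$, before summing over $2^r$ sets $S$. Even the right family, listed naively as all unions of components of all balanced $A_d$, costs $2^r[z^k](1+z+z^2)^r$, roughly $6^{n/2}$ at $\alpha=1/2$. The paper's device works per balanced $d$ with first support index $p$. It forms the tight tree $F_d=\bigcup_jA_{d-\mathbf e_p+\mathbf e_j}$ (via the linkage property of triangulations of products of simplices). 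It then contracts $A_d$ and deletes the parent edges at load-$2$ vertices to get a forest $T_d$, and outputs only the sets given by proper $2$-colourings of $T_d$. Coverage follows by choosing $d$ extremally (largest $p$, then least sum of normalized prices) and reassigning along a monochromatic edge. This gives $2^{b(d)}$ outputs with $b(d)=p+|\{l>p:d_l=2\}|$. Summing over $p$ is what produces $1+z+2z^2$, hence $2+z+z^2$ and the factor $2^{k-r}$. Your per-vertex reading of $2+z+z^2$ and the ``slot/orientation normalization'' $2^{2\alpha-1}$ are read off from the formula for $\gamma$ rather than counting any object.

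A DP that advances one $R$-vertex at a time also passes through prefixes that split a component of $A_d$ (e.g.\ $\ldots v\,a\,v'\ldots$ with $A(v)=A(v')=a$). States must therefore be kept only at block boundaries, with transitions appending whole components of up to three vertices from a start vertex in $I$. Missing $I$--$R$ edges also need a dominating symbolic cost, and an independent set that is not supplied must be computed, e.g.\ in $O^*(2^{n/3})$ time. Your check that $\gamma(\alpha)<2$ via the log-derivative $(3-7\alpha)/4$ at $z=1$, and the value at $\alpha=1/2$, are correct.
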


Here $\alpha$ is fixed independently of~$n$. We describe the algorithm
with a suitable independent set~$R$ supplied; $R$ can be obtained within the same time bound
if needed.

At $\alpha=1/2$, the minimum in~\eqref{eq:gamma} occurs at $z=\sqrt2$,
giving $\gamma(1/2)=\sqrt{1+2\sqrt2}=1.956636\ldots$. A bipartite graph
with a tour has equally sized parts, so we obtain the following consequence.

\begin{corollary}\label{cor:bipartite}
Minimum-cost Hamiltonian cycle in an undirected bipartite graph with
arbitrary real costs can be solved deterministically in
$O^*((1+2\sqrt2)^{n/2})$ time.
\end{corollary}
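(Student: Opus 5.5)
The corollary should follow directly from Theorem~\ref{thm:main} at $\alpha=1/2$, combined with a few preliminary checks. Given an undirected bipartite graph $G=(A\cup B,E)$ with real costs, I would first compute the bipartition in linear time by breadth-first search on each connected component. If $G$ is disconnected (with $n\ge 3$), no Hamiltonian cycle exists and we report this. Likewise, if $|A|\ne|B|$ we report that no tour exists. The reason is that a Hamiltonian cycle in a bipartite graph alternates between the two sides. It therefore has even length $n$ and uses exactly $n/2$ vertices from each part.

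In the remaining case $|A|=|B|=n/2$, so $A$ is an independent set of size exactly $\alpha n$ with $\alpha=1/2\in(3/7,1/2]$. Since this is a fixed value of $\alpha$, Theorem~\ref{thm:main} applies with $R=A$ supplied. It yields a deterministic algorithm running in $O^*(\gamma(1/2)^n)$ time that either returns a minimum-cost Hamiltonian cycle or reports that none exists.

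It remains to evaluate $\gamma(1/2)$ from \eqref{eq:gamma}. At $\alpha=1/2$ the prefactor is $2^{0}=1$, so
\[
\gamma(1/2)=\min_{z\ge1}\Bigl(\frac{2+z+z^2}{z}\Bigr)^{1/2}=\min_{z\ge1}\bigl(2/z+1+z\bigr)^{1/2}.
\]
The function $2/z+z$ is convex on $z>0$ and is minimized where $-2/z^2+1=0$, that is, at $z=\sqrt2\ge1$. There its value is $2\sqrt2$, so $\gamma(1/2)=\sqrt{1+2\sqrt2}$. Hence $\gamma(1/2)^n=(1+2\sqrt2)^{n/2}$, which is the claimed bound.

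There is no substantive obstacle here. The only points requiring care are the unbalanced and disconnected cases, which are handled before invoking the theorem. It also helps to note that bipartiteness makes $R$ available without the separate independent-set search mentioned after Theorem~\ref{thm:main}.
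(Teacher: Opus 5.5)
Your proposal is correct and takes the paper's approach: apply Theorem~\ref{thm:main} at $\alpha=1/2$, use that a bipartite graph with a tour has equally sized parts, and evaluate the minimum in \eqref{eq:gamma} at $z=\sqrt2$ to get $\gamma(1/2)=\sqrt{1+2\sqrt2}$. Your preprocessing (computing the bipartition and rejecting unbalanced or disconnected inputs) spells out what the paper leaves implicit.
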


\subsection{Overview}

Let $L=V\setminus R$, $m=|L|$, and $k=|R|$, and write
$w=(w_{lr})_{l\in L,r\in R}$ for the edge costs between $L$ and $R$. We first
assume that all these edges are present and $w$ is generic; these assumptions are removed in Section~\ref{sec:general}.

\vspace{-10pt}

\paragraph{Assignments in optimal tours.}
An \emph{assignment}
$A\colon R\to L$ chooses one neighbor in~$L$ for every $r\in R$;
its \emph{load} $d$ is the number of vertices of~$R$
assigned to each vertex~$l\in L$. (The terminology comes from standard usage in scheduling problems.) Denote the unique minimum-cost assignment of each
load by~$A_d$.
The structural result (Theorem~\ref{thm:exposure}) is that every
minimum-cost tour contains some~$A_d$ with $d_l\le2$. Its proof uses
auxiliary prices on~$L$, Brouwer's fixed-point theorem, and a
Hamiltonian-cycle parity argument.

\vspace{-10pt}

\paragraph{Structure of minimum-cost assignments.}
The cross-edge cost matrix also defines a tropical hyperplane
arrangement, or equivalently a regular triangulation of a product of simplices
\cite{DS04,GNP24}. We use two known structural results from these studies, phrased in assignment language (Lemma~\ref{lem:neighbor-assignments}): assignments with ``neighboring'' loads are all subgraphs of a
spanning tree $F_d$, and moving one unit of load to a vertex preserves
its previously assigned neighbors~\cite{GNP24,Yao25}.

\vspace{-10pt}

\paragraph{Enumerating compatible subsets.}
A subset~$U\subseteq L\cup R$ is \emph{compatible} with an assignment $A$
if it is a union of the connected components of $A$. Let~$\cC(w)$ consist of the sets
compatible with at least one~$A_d$ of load at most two. Our goal is to generate all sets in $\cC(w)$ efficiently.

For each load~$d$,
we obtain the tree $F_d$ using assignments with neighboring loads.
Contracting edges in $A_d$ and deleting selected edges
gives a forest~$T_d$ on~$L$. Each proper two-coloring of~$T_d$ gives a compatible subset.
The aforementioned structure guarantees that all subsets of $\mathcal{C}(w)$ can be generated this way (Proposition~\ref{prop:enumeration}).

With a standard generating function calculation (Proposition~\ref{prop:count}), the number of compatible subsets is
\[
N_{m,k}
=2^{k-m}\sum_{j=k}^{2m}[z^j](2+z+z^2)^m\le \min_{z\geq 1} \ 2^{k-m}z^{-k}(2+z+z^2)^m \leq \gamma(\alpha)^n,
\]
where $[z^j]P(z)$ is the coefficient of $z^j$ in polynomial~$P(z)$. The corresponding
exponential bound has base~$\gamma(\alpha)<2$ for $\alpha>3/7$.

\vspace{-10pt}

\paragraph{The restricted dynamic program.}
The assignment components in a tour containing~$A_d$ are paths on at
most three vertices, traversed as consecutive blocks. Starting at
$s\in R$ along its assignment edge, every completed-block prefix
belongs to~$\cC(w)$. Therefore a Bellman--Held--Karp dynamic program using
only these vertex subsets, with transitions appending at most three
vertices, retains an optimal tour (Section~\ref{sec:dp}).

\section{The combinatorics of assignments}\label{sec:prelim}

Throughout Sections~\ref{sec:prelim}--\ref{sec:dp}, $G=(L\cup R,E)$ is a simple undirected graph, $R$ is independent, $|L|=m$, $|R|=k$, and $1\le k\le m$. We also assume that all edges of $L\times R$ (i.e.\ \emph{cross-edges}) are present, while the edges inside~$L$ are not restricted. The symbol~$w$ denotes the edge costs and, when used as a matrix, their restriction to~$L\times R$. We write $lr$ for an edge with $l\in L$, $r\in R$, and $\one\in\R^L$ for the all-ones vector.

\subsection{Assignments, prices, and tight graphs}

\begin{definition}
    An \emph{assignment} is a map $A\colon R\to L$, identified with a collection of $k$ cross-edges in $G$. The \emph{load} of this assignment is a nonnegative integer vector $d^A\in \mathbb{N}^L$ where $d^A_l=|A^{-1}(l)|$, and the \emph{cost} $w(A)$ is the sum of the costs of all edges of $A$. A load vector is called \emph{balanced} if $d_l \in \{0, 1, 2\}$ for all $l\in L$.
\end{definition}

In the bipartite case where $m = k$, an assignment is similar to but more flexible than the perfect matchings considered in~\cite{Ned20}.

\begin{definition}
    For a simple cycle $l_1r_1l_2r_2\cdots l_tr_tl_1$ in~$K_{L,R}$, its \emph{alternating sum} is the sum of the costs of alternate edges minus the sum for the remaining edges. Equivalently, it is $\sum_{i=1}^t(w_{l_ir_i}-w_{l_{i+1}r_i})$, where $l_{t+1}=l_1$. The cost matrix~$w$ is \emph{generic} if every such sum is nonzero.
\end{definition} 

\begin{definition}
For a \emph{price vector} $a\in\R^L$, define the \emph{adjusted cost} of a cross-edge by
$\widehat w_{lr}(a)=w_{lr}+a_l$. An edge $lr$ is \emph{tight} (for $a$) if it is one of the edges incident to $r$ with the minimum adjusted cost. The \emph{tight graph} $F(a)$ is the collection of tight edges over all $r\in R$. An assignment is also called \emph{tight} if all of its edges are tight.
\end{definition}
Every $r\in R$ has at least one neighbor in~$L$, and adding a common constant to the prices does not change~$F(a)$. Moreover, the prices change the adjusted cost of all assignments with the same load equally, giving us the following properties:

\begin{lemma}\label{lem:dual}
A tight assignment $A$ has minimum cost among all assignments of load~$d^A$. Equality for another assignment holds only when it is also tight.
\end{lemma}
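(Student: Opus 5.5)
The plan is to compare costs through the price vector $a$. Take any assignment $B$ with the same load $d^B=d^A=d$. I will write its adjusted cost as
\[
\widehat w(B)=\sum_{r\in R}\widehat w_{B(r)r}(a)=w(B)+\sum_{r\in R}a_{B(r)}=w(B)+\sum_{l\in L}d_l a_l .
\]
The last term $\sum_l d_l a_l$ depends only on the load. So any two assignments of load $d$ have adjusted and original costs that differ by the same constant $\langle d,a\rangle$. It is therefore enough to prove both claims for adjusted costs.

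Next, the adjusted cost splits as a sum over $r\in R$ of independent choices. For each $r$, the term $\widehat w_{B(r)r}(a)$ is at least $\mu_r:=\min_{l}\widehat w_{lr}(a)$. Equality holds exactly when $B(r)r$ is tight, by the definition of tightness. Summing over $r$ gives $\widehat w(B)\ge\sum_r\mu_r$, with equality if and only if every edge of $B$ is tight, i.e.\ $B$ is a tight assignment. Here I use that each $r$ has at least one neighbor, so that $\mu_r$ is defined.

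To finish, the tight assignment $A$ attains $\sum_r\mu_r$. Hence $\widehat w(A)\le\widehat w(B)$ for all $B$ of load $d$, and subtracting $\langle d,a\rangle$ gives $w(A)\le w(B)$, so $A$ is a minimizer. If $w(B)=w(A)$, then $\widehat w(B)=\sum_r\mu_r$. By the termwise equality condition, every edge of $B$ is tight, so $B$ is tight.

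I expect no real obstacle here. The only point to be careful about is that the bound $\widehat w(B)\ge\sum_r\mu_r$ is over all assignments, not only those of load $d$. The load restriction enters solely through the constant $\langle d,a\rangle$, and equal loads are what make this constant cancel.
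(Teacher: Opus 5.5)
Your proof is correct and matches the paper's own justification, which is the remark just before the lemma that prices shift the adjusted cost of every assignment of a given load by the same constant $\sum_l d_l a_l$. Your termwise comparison with $\mu_r=\min_l \widehat w_{lr}(a)$, including its equality case for the second claim, spells out that one-line argument in full.
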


\begin{lemma}\label{lem:forest}
If $w$ is generic, every tight graph~$F(a)$ is a forest.
\end{lemma}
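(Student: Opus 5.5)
We argue by contradiction: a cycle in $F(a)$ would force a vanishing alternating sum, which genericity rules out. Since $R$ is independent and $F(a)$ consists only of cross-edges, $F(a)$ is a bipartite graph on $L\cup R$, and any cycle in it has the form $l_1r_1l_2r_2\cdots l_tr_tl_1$ with $t\ge2$ and distinct vertices. Note that $F(a)$ is a simple graph: each tight edge $lr$ is a single cross-edge, so there are no parallel edges. Hence such a cycle is a simple cycle in $K_{L,R}$ of the kind covered by the definition of genericity.

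The key step uses the fact that each $r_i$ on the cycle has two distinct tight edges, $l_ir_i$ and $l_{i+1}r_i$ (indices mod $t$). By the definition of tightness, both attain the minimum adjusted cost at $r_i$, so
\[
w_{l_ir_i}+a_{l_i}=w_{l_{i+1}r_i}+a_{l_{i+1}}.
\]
Summing these identities over $i=1,\dots,t$, the price terms telescope, because $\sum_i a_{l_i}=\sum_i a_{l_{i+1}}$ cyclically. This leaves
\[
\sum_{i=1}^t\bigl(w_{l_ir_i}-w_{l_{i+1}r_i}\bigr)=0.
\]
So the alternating sum of the cycle is zero, contradicting genericity. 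Therefore $F(a)$ is acyclic, i.e., a forest.

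There is no real obstacle here. The only points needing care are that the cycle in $F(a)$ alternates between $L$ and $R$ (from the independence of $R$ and the restriction to cross-edges), and that each $r_i$ contributes exactly the two cycle edges used in the telescoping sum, which holds because the cycle is simple.
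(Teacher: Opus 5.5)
Your proof is correct and follows the paper's own argument: at each $r_i$ on a cycle of tight edges the two incident edges have equal adjusted cost, summing around the cycle cancels the prices, and the resulting zero alternating sum contradicts genericity. Your extra remarks on simplicity and bipartiteness of the cycle are harmless. The independence of $R$ is not actually needed, since $F(a)$ contains only cross-edges by definition.
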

\begin{proof}
On a cycle of tight edges, the two incident edges at each $r\in R$
have equal adjusted cost. Summing these equalities around the cycle
cancels the prices and gives a zero alternating sum, contrary to
genericity.
\end{proof}

\begin{lemma}\label{lem:canonical}
For generic~$w$ and any nonnegative integer load vector~$d$ with
$\sum_l d_l=k$, the minimum-cost assignment of that load is unique; call the \emph{canonical assignment} $A_d$.
\end{lemma}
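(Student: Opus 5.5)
Existence is immediate: there are finitely many assignments of load $d$, and at least one exists because all cross-edges are present, $d\ge0$ and $\sum_l d_l=k$. So the whole task is uniqueness. I will argue by contradiction, using only the genericity hypothesis and a symmetric-difference argument, and without going through prices.

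Suppose $A\neq B$ are two distinct minimum-cost assignments of load $d$. Look at the multigraph $A\triangle B$ on $L\cup R$, made of the cross-edges that lie in exactly one of $A$ and $B$. Each $r\in R$ has exactly one edge in $A$ and one in $B$. So $r$ has degree $0$ in $A\triangle B$ if $A(r)=B(r)$, and degree $2$ otherwise, with one $A$-edge and one $B$-edge. At each $l\in L$, the number of $A$-edges at $l$ equals $d_l$, which equals the number of $B$-edges at $l$. Removing the common edges keeps these counts equal. Hence at every vertex, $A\triangle B$ has the same number of $A$-edges as $B$-edges.

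Every vertex of $A\triangle B$ therefore has even degree, with its edges split evenly between the two colors. This lets me decompose $A\triangle B$ into alternating closed walks, pairing $A$- and $B$-edges at each vertex. A vertex $r\in R$ has degree exactly $2$, so it is visited once, and any repetition in such a walk happens at $L$-vertices. I will then split a closed alternating walk at a repeated $l$ into shorter closed alternating walks. Because $L$ and $R$ alternate along the walk, each piece is still alternating, and the splitting ends in simple alternating cycles $l_1r_1\cdots l_tr_tl_1$ with $t\ge2$. Here $t=1$ cannot occur, since the graph is simple and $A(r)\neq B(r)$ at any $r$ on the cycle.

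Let $C_1,\dots,C_p$ be these cycles; since $A\neq B$, we have $p\ge1$. Orient each cycle so that the edges $l_ir_i$ belong to $A$ and the edges $l_{i+1}r_i$ belong to $B$. Then $w(A)-w(B)=\sum_j \sigma(C_j)$, where $\sigma$ is the alternating sum. This is the one step that needs care: the alternating sum must be exactly $A$-edges minus $B$-edges. For each $C_j$, consider the assignment $A_j$ obtained from $A$ by switching, at each $r\in C_j$, to its $B$-edge. Along the cycle every $l$ loses one $A$-edge and gains one $B$-edge, so $A_j$ still has load $d$. Its cost is $w(A)-\sigma(C_j)$. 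Since $A$ is minimal, $\sigma(C_j)\le0$. Genericity gives $\sigma(C_j)\neq0$, so $\sigma(C_j)<0$. Then $w(A_j)>w(A)$ for every $j$, which is consistent so far. But now $w(B)=w(A)-\sum_j\sigma(C_j)>w(A)$, which contradicts $w(B)=w(A)$. Hence the minimum-cost assignment of load $d$ is unique, which proves Lemma~\ref{lem:canonical}. The main obstacle is the cycle decomposition; the rest is bookkeeping.
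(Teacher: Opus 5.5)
Your proof is correct and takes the paper's approach: the symmetric difference of two same-load assignments decomposes into alternating cycles, and by genericity reassigning along each cycle strictly increases the cost of a minimum. Summing over the cycles then contradicts $w(A)=w(B)$; you spell out the cycle decomposition that the paper states in a single line.
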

\begin{proof}
Reassigning along a cycle alternating between assigned edges and unassigned edges preserves the load, so starting from
a minimum-cost assignment its cost increase is nonnegative, and hence
positive by genericity. The difference of two assignments of the same
load decomposes into alternating cycles; therefore the optimum is unique.
\end{proof}

\subsection{Canonical assignment in optimal tours}

We now show that an optimal Hamiltonian cycle contains a tight assignment.
We use the following consequence of Thomason's ``lollipop''
theorem for Hamiltonian paths~\cite{Tho78}.

\begin{lemma}\label{lem:parity}
Let $K$ be a simple bipartite graph with equally sized parts $X$ and $Y$. If every vertex of~$X$ has degree three, every edge of~$K$ lies on an even number of Hamiltonian cycles. In particular, a Hamiltonian~$K$ has a second Hamiltonian cycle.
\end{lemma}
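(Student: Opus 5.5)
The plan is to derive the lemma from Thomason's lollipop theorem~\cite{Tho78}, which states: in a finite graph in which every vertex other than a fixed vertex $v$ has odd degree, every edge at $v$ lies on an even number of Hamiltonian cycles. Here only the vertices of $X$ have odd degree, so we first reduce to that setting. The tool is the standard parity statement underlying the lollipop argument: for a graph $H$ and a vertex $v$, every vertex $u\neq v$ of odd degree in $H$ is the endpoint of an even number of Hamiltonian paths of $H$ starting at $v$. Its proof is the lollipop exchange. Build the auxiliary graph whose nodes are Hamiltonian paths from $v$. Two paths are adjacent when one is obtained from the other by adding an edge from the free endpoint $x$ to an interior vertex $y$ and deleting the edge from $y$ to its successor. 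In this graph a path ending at $x$ has degree $\deg(x)-1$. So the paths ending at odd-degree vertices have even degree, and the others have odd degree. The handshake lemma then forces an even number of odd-degree nodes.

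Now fix an edge $e=xy$ of $K$ with $x\in X$, $y\in Y$. Hamiltonian cycles through $e$ correspond bijectively to Hamiltonian paths of $K-e$ from $y$ to $x$. In $K-e$, the vertex $x$ has degree $2$, with two remaining neighbours $y_1,y_2\in Y$. A Hamiltonian path from $y$ ending at $x$ must arrive via $y_1$ or $y_2$. Hence the count equals the number of Hamiltonian paths of $H=K-e-x$ from $y$ ending at $y_1$, plus the number ending at $y_2$. Ending at $y_i$ forces the path's last vertex to be in $Y$, which is consistent with parity since $|Y|=|X|$ and $H$ has one more $Y$-vertex than $X$-vertex.

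The step I expect to be the main obstacle is turning the condition ``degree three on $X$'' into an odd-degree condition at the relevant endpoints, because the degrees on $Y$ are arbitrary. My first approach is to run the exchange argument only on Hamiltonian paths of $H$ from $y$, restricted to those whose free endpoint lies in $Y$. In a bipartite graph, a Hamiltonian path from $y\in Y$ in $H$ has $|Y|$ vertices of $Y$ and $|X|-1$ of $X$. It therefore alternates $Y,X,\dots,Y$ and necessarily ends in $Y$. A rotation at a $Y$-endpoint $z$ uses an interior neighbour of $z$, which lies in $X$, and the new endpoint is a successor of an $X$-vertex, hence lies in $Y$ again. So the parity bookkeeping has to be redone at $X$ vertices rather than at endpoints. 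To do this, I would set up the lollipop graph with nodes being pairs (path, chosen non-path edge at the endpoint). Then the degree computation involves $\deg_H$ of the $X$-vertices adjacent to the endpoint. These degrees are $3$, or $2$ for $y_1,y_2$'s adjustments via $x$, and the resulting parities must make the total for $y_1$ and $y_2$ together even. If this direct bookkeeping becomes messy, the fallback is to reduce to Thomason's theorem: subdivide or replace each $Y$-vertex by a gadget so that it becomes odd-degree while Hamiltonian cycles through $e$ are preserved bijectively, then apply the lollipop theorem at $v=y$.

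The final sentence of the lemma is then immediate. If $K$ has a Hamiltonian cycle $C$, pick any edge $e\in C$. The number of Hamiltonian cycles through $e$ is even and at least $1$, hence at least $2$, so $K$ has a Hamiltonian cycle other than $C$.
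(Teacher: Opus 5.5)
\textbf{There is a genuine gap, and one of your auxiliary claims is false.}

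\emph{The ``tool'' is false as stated.} The handshake argument in the exchange graph shows that the number of Hamiltonian paths from $v$ ending at \emph{even}-degree vertices is even. It says nothing about the number of paths ending at a single odd-degree vertex. For a counterexample, take the path $v\,a\,b$: the vertex $b$ has degree $1$, yet it is the endpoint of exactly one Hamiltonian path from $v$.

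\emph{The core step is missing.} Your reduction to $H=K-x$, counting paths from $y$ to $y_1$ or $y_2$, puts every endpoint of the rotation process in $Y$, where degrees are arbitrary. This is exactly the obstacle you identify. Neither the ``(path, edge) pairs'' bookkeeping nor the gadget replacement is carried out, and it is not clear a suitable gadget exists. The claim that the counts at $y_1$ and $y_2$ sum to an even number is precisely the content of the lemma, so it cannot be left as something that ``must'' happen.

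\emph{The missing idea is to orient the cycles the other way.} Traverse each Hamiltonian cycle through $e=xy$ as $y,x,\dots$ in $K$ itself, and drop its final edge back to $y$. This is a bijection with Hamiltonian paths of $K$ whose first edge is $yx$ and whose last vertex is a neighbour of $y$.
\begin{itemize}
\item Since $|X|=|Y|$ and the path starts in $Y$, every Hamiltonian path with this first edge ends in $X$. All degrees there are three.
\item Build the lollipop exchange graph on Hamiltonian paths with fixed first edge $yx$. Rotations may not delete $yx$; instead, the edge $uy$ at an endpoint $u$ closes a cycle.
\item A path ending at $u\in X$ then has degree $\deg(u)-1-[u\sim y]=2-[u\sim y]$ in this graph. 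Here $[u\sim y]$ is $1$ if $u$ is adjacent to $y$ and $0$ otherwise.
\item So the odd-degree nodes are exactly the paths ending at neighbours of $y$. By the handshake lemma their number, which is the number of Hamiltonian cycles through $e$, is even.
\end{itemize}
The degrees on $Y$ never enter. This is the paper's proof: Thomason's Theorem~1.1 applied with start $y_0\in Y$ and first edge $y_0x_0$, using that every possible endpoint lies in $X$ and has degree three.

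Your final deduction of a second Hamiltonian cycle from the parity statement is fine.
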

\begin{proof}
Fix $y_0x_0\in E(K)$ with $x_0\in X$. By
\cite[Theorem~1.1]{Tho78}, the number of Hamiltonian paths beginning
with $y_0,x_0$ and ending at a neighbor of~$y_0$ is even: each
possible endpoint lies in~$X$ and has degree three. Closing the paths
by their last edge to~$y_0$ gives the claimed parity.
\end{proof}

\begin{theorem}\label{thm:exposure}
Let $H$ be a minimum-cost Hamiltonian cycle of~$G$. There exist prices $a\in\R^L$ such that every $r\in R$ is incident to an edge of $H\cap F(a)$. Consequently, $H$ contains an assignment of minimum cost for its load. For generic~$w$, this assignment is a canonical assignment of a balanced load vector.
\end{theorem}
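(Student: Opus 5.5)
The plan is to reduce everything to the first sentence and then treat it as a fixed-point problem, using Lemma~\ref{lem:parity} to rule out the bad fixed points.

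\emph{The consequences.} Suppose prices $a$ are found with every $r\in R$ incident to an edge of $H\cap F(a)$. Choosing one such edge per $r$ gives a tight assignment $A\subseteq H$. By Lemma~\ref{lem:dual}, $A$ has minimum cost for its load. Every $l\in L$ has degree two in $H$, so $d^A_l\le2$ and the load is balanced. For generic $w$, Lemma~\ref{lem:canonical} then identifies $A$ with $A_{d^A}$.

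\emph{Reformulation.} Since $R$ is independent, each $r$ has exactly two $H$-neighbours, both in $L$. The cycle $H$ therefore decomposes into $k$ vertices of $R$ alternating with $k$ maximal segments, each a path inside $L$. Define the defect
\[
g_r(a)=\min_{lr\in H}\widehat w_{lr}(a)-\min_{l\in L}\widehat w_{lr}(a)\ \ge 0 .
\]
This is continuous and piecewise linear in $a$, and the goal is a price vector with $g\equiv0$. Prices can be normalised to a compact set, namely a simplex of bounded prices modulo the constant shift. On this set I would define a Brouwer map that raises $a_l$ for vertices $l$ that are ``stealing'' some $r$ (tight for $r$ but not an $H$-neighbour of $r$). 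It would simultaneously lower prices on $H$-neighbours of defective $r$'s, with truncation so that the map stays inside the compact set.

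\emph{Fixed points and the parity step.} At a fixed point $a^*$ where not every defect vanishes, the defective vertices and their stealing vertices have to form a balanced configuration. Concretely, I expect each defective $r$ to have its two $H$-edges tied in adjusted cost with one tight non-$H$ edge $lr$. To exploit this, build an auxiliary bipartite graph $K$:
\begin{itemize}
\item $X=R$ and $Y$ is the set of $k$ contracted segments, so the parts have equal size;
\item the edges are the $H$-edges, plus one tight third edge at each $r$ (for non-defective $r$, an arbitrary third tight or near-tight edge chosen so that $K$ stays simple).
\end{itemize}
Then $K$ is Hamiltonian via $H$ and every vertex of $X$ has degree three. Lemma~\ref{lem:parity} yields a second Hamiltonian cycle $H'$ of $K$.

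Next, lift $H'$ to a tour of $G$ and compare costs using adjusted costs. The price terms should cancel because each segment is entered and left once. The tightness equalities at $a^*$ then show that $H'$ costs no more than $H$, and strictly less if it uses a strictly cheaper stealing edge. That contradicts optimality, or, in the tie case, lets us replace $H$ and continue by a potential argument (lexicographic or generic perturbation of $w$, followed by a limit).

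\emph{Main obstacle.} The hardest step is making the contraction and lifting legitimate. A stealing vertex $l$ may lie in the interior of a segment, in which case a cycle in $K$ does not lift directly to a Hamiltonian cycle of $G$. My first attempt would be to design the Brouwer map so that at a fixed point the only possible stealers are segment endpoints, since interior vertices carry no $H$-cross-edges and could be priced up freely. If that fails, I would instead keep interior segment vertices as paths that must be traversed whole, and check that Thomason's parity still applies to the resulting graph.
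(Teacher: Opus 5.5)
\textbf{The gap is in the existence of the prices, which is the whole content of the theorem.} Your derivation of the consequences from the first sentence is correct: pick one tight $H$-edge per $r$, apply Lemma~\ref{lem:dual}, note $d^A_l\le2$ because $H$ has degree two at $l$, then apply Lemma~\ref{lem:canonical}.

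You never define the Brouwer map, and the structure of its fixed points is only ``expected.'' As phrased, that expectation is not even consistent: if both $H$-edges at $r$ were tied with a tight edge, they would be tight and $r$ would not be defective.

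More importantly, your auxiliary graph $K$ puts the degree-three side on $R$, and this orientation fails in three ways.
\begin{enumerate}
\item Lemma~\ref{lem:parity} needs a third edge at \emph{every} $r\in R$. The cost comparison needs that edge to be no more expensive, in adjusted cost, than the $H$-edge it may replace. At a non-defective $r$ the only tight edge is typically an $H$-edge, so a ``near-tight'' substitute can make $H'$ more expensive.
\item Contracting segments destroys the lift. A Hamiltonian cycle of $K$ may attach a new edge to an interior vertex of a segment, or use two cross-edges at the same endpoint. Such a cycle is not a tour of $G$.
\item Price cancellation ``per segment'' is not enough. The term $a_l$ cancels only if each individual $l$ keeps its cross-edge degree.
\end{enumerate}
Your tie-case fallback is also left vague: replacing $H$ by $H'$ proves the claim for a different optimal tour, not the given one.

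\textbf{The missing idea} is to choose the map so that a bad fixed point makes \emph{every} vertex of $L$ a strict ``stealer'' by the same margin, and then put the degree-three side on $L$.
\begin{enumerate}
\item The paper sets $M_r(a)=\min_{i\in N_H(r)}\widehat w_{ir}(a)$ and $\Phi_j(a)=\max_{r}(M_r(a)-w_{jr})$. The goal is then exactly $\Phi(a)\le a$.
\item Brouwer applied to $\Phi-\Phi_s\one$ on a bounded box with $a_s=0$ gives $\Phi(a)=a+\lambda\one$.
\item If $\lambda>0$, each $j\in L$ has an edge $jr(j)\notin H$ of adjusted cost $M_{r(j)}(a)-\lambda$.
\item Subdivide the $m-k$ edges of $H$ inside $L$, so both sides have size $m$. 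Add the $m$ distinct edges $jr(j)$, so every $l\in L$ has degree exactly three.
\item The second Hamiltonian cycle from Lemma~\ref{lem:parity} must pass through every degree-two subdivision vertex, so it keeps all $L$--$L$ edges.
\end{enumerate}
Keeping the $L$--$L$ edges makes the lift automatic. It also means every $l$ keeps its cross-edge degree, so prices cancel vertex by vertex. At each $r$, the added edges equal the removed ones in number; each added edge costs $M_r(a)-\lambda$ and each removed one at least $M_r(a)$. This yields a strictly cheaper tour, contradicting optimality, with no tie case to handle.
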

\begin{proof}
For each $r\in R$ and $j\in L$, let
\[
M_r(a)=\min_{i\in N_H(r)}\widehat w_{ir}(a),\ 
\Phi_j(a)=\max_{r\in R}\bigl(M_r(a)-w_{jr}\bigr).
\]
The condition $\Phi(a)\le a$ is equivalent to
$\widehat w_{jr}(a)\ge M_r(a)$ for all $j,r$, and thus to every
$r\in R$ having a tight edge in~$H$.

Fix $s\in L$ and put $B=\max_{i,j\in L,\ r\in R}|w_{ir}-w_{jr}|$. Taking maxima in the definition of~$\Phi$ gives
\[
|\Phi_j(a)-\Phi_s(a)|\le\max_{r\in R}|w_{sr}-w_{jr}|\le B.
\]
Thus the continuous map $\Psi(a)=\Phi(a)-\Phi_s(a)\one$ sends the compact convex box
$\{a:a_s=0,\ |a_j|\le B\text{ for all }j\}$ into itself.
Brouwer's theorem gives a fixed point~$a$, so $\Phi(a)=a+\lambda\one$ with $\lambda=\Phi_s(a)$. It remains to show that $\lambda\le0$.

Suppose $\lambda>0$. For each~$j$, choose $r(j)$ attaining the maximum defining~$\Phi_j(a)$. Then
\[
\widehat w_{j\,r(j)}(a)=M_{r(j)}(a)-\lambda<M_{r(j)}(a),
\]
so $jr(j)\notin H$ and is cheaper in adjusted cost than either tour
edge incident to~$r(j)$.

Subdivide the $m-k$ edges of~$H$ within~$L$ (i.e.\ adding a vertex at the middle of these edges, and putting the $m-k$ added vertices in $R$), and add edge $jr(j)$ for every
$j\in L$. The resulting bipartite graph has two parts of size~$m$,
with degree three at every vertex of~$L$, so Lemma~\ref{lem:parity}
gives another Hamiltonian cycle, and suppressing the degree-two subdivision
vertices yields a different tour~$H'$ retaining all $L$--$L$ edges of~$H$.

At each $r\in R$, the same number of incident edges are added and
removed when we go from $H$ to $H'$. Every added edge has adjusted cost at most $M_r(a)-\lambda$,
while every removed edge has adjusted cost at least $M_r(a)$.
Since $H'$ retains exactly the $L$--$L$ edges of~$H$, each vertex of~$L$ has the same cross-edge degree in both tours. Their price contributions therefore cancel, so
$H'$ has strictly lower total cost than $H$, contradicting optimality.
Thus $\lambda\le0$ and $\Phi(a)\le a$, as desired.
\end{proof}

\subsection{Neighboring assignments and tight trees}\label{subsec:neighbor}

For the remainder of this section and Sections \ref{sec:listing}–\ref{sec:dp}, we assume that the cost matrix $w$ is generic.

Fix an order $L=\{1,\ldots,m\}$. For each (not necessarily balanced) load vector~$d$, let $p = p(d)$ be the smallest index with positive load, and let $L_p =\{p,\ldots,m\}.$
All assignments of load~$d$ are supported on~$L_p$.
For any load vector $d$, one can compute
$A_d$ by minimum-cost perfect matching between~$R$ and $d_l$ copies
of each vertex $l\in L$ in polynomial time~\cite{Kuh55,Mun57}. 
Let $w(d)=w(A_d)$ with mild abuse of notation.

We use the following known properties of minimum-cost assignments. Let $\mathbf e_j \in \mathbb{Z}^L$ denote the unit vector at $j\in L$.

\begin{lemma}\label{lem:neighbor-assignments}
Let $S\subseteq L$ and let $u\in\mathbb Z_{\ge0}^{L}$ be supported
on~$S$, with $\sum_lu_l=k-1$. Then
$\mathcal T_S(u):=\bigcup_{j\in S}A_{u+\mathbf e_j}$
is a spanning tree on~$S\cup R$ and
$\deg_{\mathcal T_S(u)}(l)=u_l+1$ for each $l\in S$.
Given $\mathcal T_S(u)$, each $A_{u+\mathbf e_j}$ can be obtained by assigning each $r\in R$ to the first vertex in $S$ on the unique path from $r$ to $j$ in $\mathcal T_S(u)$.

For every $i,j\in S$ with $i\neq j$, if $r\in R$ is assigned to $i$ in $A_{u+\mathbf e_j}$, then it is also assigned to $i$ in $A_{u+\mathbf e_i}$. (Also, there is exactly one additional vertex assigned to $i$ in $A_{u+\mathbf e_i}$ compared to $A_{u+\mathbf e_j}$.)
\end{lemma}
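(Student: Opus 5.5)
We will prove the lemma via prices and tight graphs, in the spirit of the tropical/regular-triangulation picture. The plan is to find a single price vector $a$ whose tight graph contains every $A_{u+\mathbf e_j}$, and then read off all three claims from that one forest. Throughout, only the columns indexed by $S$ matter, so we restrict $w$ to $S\times R$.

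\emph{Step 1: a common tight tree.} Consider the transportation-type LP of minimum-cost fractional assignments $x\ge0$ on $S\times R$ with $\sum_l x_{lr}=1$ and $\sum_r x_{lr}\le u_l+1$ for $l\in S$. The capacity total is $k-1+|S|\ge k$. By LP duality there are prices $a_l\ge0$ (duals of the capacities) such that the optimal solutions are supported on tight edges. Rather than go through the LP, we will argue more directly via the $M$-convexity of $d\mapsto w(d)$ (the minimum-cost assignment function is an $M^\natural$-concave/convex function of the load, being a min-cost flow value). Concretely, choose $a$ as a subgradient at the face spanned by the loads $u+\mathbf e_j$, $j\in S$, i.e.\ $w(u+\mathbf e_j)+a_j$ is constant in $j$. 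Such an $a$ exists simply by defining $a_j:=-w(u+\mathbf e_j)$. We then check, using exchange arguments along alternating paths together with Lemma~\ref{lem:dual}, that each $A_{u+\mathbf e_j}$ is tight for this $a$. The point is that any $r$ with a strictly cheaper adjusted edge $ir$ yields a reassignment producing load $u+\mathbf e_i$ (after following an alternating path). Its cost would then be less than $w(u+\mathbf e_i)$, which is a contradiction.

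\emph{Step 2: tree structure and degrees.} By Lemma~\ref{lem:forest}, $F(a)$ is a forest, so the union $\mathcal T_S(u)$ is a forest. Counting edges settles the rest. Each $r$ has degree at least one, and the union has at least $k+|S|-1$ edges: for each $j$, $A_{u+\mathbf e_j}$ puts $u_j+1$ edges at $j$, so $\deg(j)\ge u_j+1$, giving $\sum_{l\in S}\deg(l)\ge k-1+|S|$. A forest on $|S|+k$ vertices has at most $|S|+k-1$ edges. Equality therefore holds throughout, so the union is a spanning tree and $\deg(l)=u_l+1$ exactly. As a byproduct, $A_{u+\mathbf e_j}$ uses all edges at $j$ and misses exactly one edge at each $l\ne j$.

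\emph{Step 3: path description and monotonicity.} In the tree, root at $j$. Since $A_{u+\mathbf e_j}$ contains all edges at $j$, and for $l\neq j$ omits exactly one of its $u_l+1$ tree edges, the omitted edge must be the one toward $j$. If it were any other edge, a counting argument on the subtree would leave some $R$-vertex in that subtree unassigned or doubly assigned. Hence each $r$ is assigned to its neighbor on the path to $j$, which is the first $S$-vertex on that path. For $i\neq j$: if $r$ is assigned to $i$ in $A_{u+\mathbf e_j}$, then $i$ is the first $S$-vertex from $r$ toward $j$, and $r$'s tree neighbor $i$ is also first toward $i$. The extra vertex at $i$ is the $R$-neighbor of $i$ on the path toward $j$.

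The main obstacle is Step 1, namely proving simultaneous tightness without circularity. We would try first to justify it by the min-cost-flow reduction: the $A_{u+\mathbf e_j}$ are the optimal flows with supplies $u+\mathbf e_j$. For flow problems whose supplies differ only at a single node, the node potentials of an optimal flow for the relaxed capacity $u+\one$ serve all of them, by standard potential/complementary-slackness theory.
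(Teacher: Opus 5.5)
Your Steps 2 and 3 correctly derive everything from simultaneous tightness: the edge count against Lemma~\ref{lem:forest} and the subtree count are both right. But Step 1 carries essentially all of the lemma's content, and it is asserted rather than proved. Calling $a$ ``a subgradient at the face spanned by the loads $u+\mathbf e_j$'' presupposes that these loads lie on a common lower face, which is exactly what must be shown. The sentence about ``following an alternating path'' hides the crux.

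Suppose $A_{u+\mathbf e_j}(r)=l\ne j$ and $w_{ir}+a_i<w_{lr}+a_l$. Moving $r$ to $i$ gives load $u+\mathbf e_j-\mathbf e_l+\mathbf e_i$. To reach $u+\mathbf e_i$ you need an alternating path carrying one unit from $j$ to $l$ that avoids $r$ and costs \emph{at most} $w(u+\mathbf e_l)-w(u+\mathbf e_j)$. Optimality of $A_{u+\mathbf e_l}$ only gives the opposite inequality for an arbitrary such path. What is really needed is the monotonicity $A_{u+\mathbf e_j}(r)=l\Rightarrow A_{u+\mathbf e_l}(r)=l$. That is the lemma's third claim, which you derive in Step 3 \emph{from} Step 1, so as written the argument is circular.

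Your fallback is false. Take $k=1$, $S=\{1,2,3\}$, $u=0$ and $w_{1r}<w_{2r}<w_{3r}$. The relaxed-capacity LP sends $r$ to $1$, and complementary slackness forces $a_2=a_3=0$ for every optimal dual. Then $3r$ is never tight, so $A_{\mathbf e_3}$ is never tight.

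The gap closes if you prove monotonicity first, directly. Orient the edges of $A_{u+\mathbf e_l}\setminus A_{u+\mathbf e_j}$ from $L$ to $R$, and those of $A_{u+\mathbf e_j}\setminus A_{u+\mathbf e_l}$ from $R$ to $L$. A directed cycle is an alternating cycle that can be switched in either assignment without changing its load. Optimality of both assignments then forces a zero alternating sum, contradicting genericity.

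Hence the difference is a single simple path $l=l_0,r_1,l_1,\dots,r_t,l_t=j$ with $A_{u+\mathbf e_l}(r_s)=l_{s-1}$ and $A_{u+\mathbf e_j}(r_s)=l_s$. An $r$ with $A_{u+\mathbf e_j}(r)=l$ cannot lie on it, since $l_s\ne l_0$ for $s\ge1$, so $A_{u+\mathbf e_l}(r)=l$; also $r_1$ is the unique extra vertex at $l$. Now moving $r$ from $l$ to $i$ inside $A_{u+\mathbf e_l}$ gives load $u+\mathbf e_i$. This yields $w_{lr}-w(u+\mathbf e_l)\le w_{ir}-w(u+\mathbf e_i)$, which is your Step 1.

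Alternatively, M-convexity of $w$ together with Murota's local optimality criterion works. You would have to check $w(u+\mathbf e_j+\mathbf e_l-\mathbf e_i)+w(u+\mathbf e_i)\ge w(u+\mathbf e_j)+w(u+\mathbf e_l)$ via the exchange axiom and cite both facts.

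With this repair your proof is a self-contained elementary argument. The paper, by contrast, does not prove the lemma; it imports it as the linkage property of (regular) triangulations of $\Delta_{m-1}\times\Delta_{k-1}$ from \cite{GNP24,Yao25}.
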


This lemma follows by applying \cite[Proposition~3.4]{GNP24} on (regular) triangulations of products of two simplices induced by the
restricted cost matrix on $S\times R$; the union formulation is also directly given
in~\cite[Proposition~3.8]{Yao25}. See the next subsection for some further discussion on this connection.

We now apply this lemma with $S=L_p$. The following construction
gives a \emph{normalized price vector} $a^{(d)}$ and a corresponding tight tree $F_d$ for each load $d$.

\begin{lemma}\label{lem:least}
Let $u=d-\mathbf e_p$ and $d^{(j)}=u+\mathbf e_j$ for each $j\in L_p$,
and define
$a^{(d)}_j=w(d)-w(d^{(j)})$.
These are the coordinatewise least prices with $a_p=0$ making $A_d$
tight on $L_p\times R$. The resulting tight graph is
\begin{equation}\label{eq:tree-union}
F_d:=\mathcal T_{L_p}(u)=\bigcup_{j\in L_p}A_{d^{(j)}},
\end{equation}
which is a spanning tree on the vertex set $L_p\cup R$. When $F_d$ is rooted at $p$, every $r\in R$ has parent~$A_d(r)$,
and each $l\in L_p$ has exactly $d_l$ children; each $l>p$ has a
parent in~$R$. 

For fixed~$p$, different load vectors $d$ yield different $a^{(d)}$.
The prices $a^{(d)}$ and tree $F_d$ are computable from $d$ in polynomial time.
\end{lemma}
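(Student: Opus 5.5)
We apply Lemma~\ref{lem:neighbor-assignments} with $S=L_p$ and $u=d-\mathbf e_p$. Since $d_p\ge1$, $u$ is a nonnegative vector supported on $L_p$ with total $k-1$. The lemma then directly gives that $F_d=\mathcal T_{L_p}(u)$ is a spanning tree on $L_p\cup R$ with $\deg_{F_d}(l)=u_l+1$. In particular $\deg(p)=d_p$ and $\deg(l)=d_l+1$ for $l>p$. The plan is to establish, in order, the structure of $F_d$, then that $a^{(d)}$ makes exactly the edges of $F_d$ tight, then minimality, and finally injectivity and computability.

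\emph{Structure of the rooted tree.} Root $F_d$ at $p$. By the path rule in Lemma~\ref{lem:neighbor-assignments} with $j=p$, $A_{d^{(p)}}=A_d$ sends $r$ to the first $L_p$-vertex on the path from $r$ to $p$. That vertex is exactly the parent of $r$, since vertices of $R$ are adjacent only to vertices of $L$ in the tree. Each $l\in L_p$ therefore has exactly $d_l$ children in $R$. Since $F_d$ is bipartite and every $R$-vertex has a parent, each $l>p$ must have a parent in $R$. This matches the degree count $d_l+1$ for $l>p$, and $d_p$ for the root.

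\emph{Tightness.} Set $a_j=a^{(d)}_j=w(d)-w(d^{(j)})$, so $a_p=0$. For each edge $jr\in F_d$ with $j\ne A_d(r)$, the edge $jr$ lies on the path from $r$ to $j$. Hence the path rule gives $A_{d^{(j)}}(r)=j$. Compare $A_{d^{(j)}}$ with $A_d$ along the tree path from $p$ to $j$; it differs by a shift along that path. I would show $\widehat w_{lr}(a)$ is constant over the edges of $F_d$ at each $r$ by comparing consecutive parts of this shift. A cleaner route is to use duality: $A_d$ is an optimal solution of a transportation LP. I would take the dual prices to be the unique potentials (up to the normalization $a_p=0$) making all edges of the spanning tree $F_d$ equally tight at each $r$. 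Then the identity $w(d^{(j)})=w(d)-a_j$ follows by telescoping adjusted costs: $A_{d^{(j)}}$ and $A_d$ are both subgraphs of $F_d$, hence both tight, and they differ only in load at $p$ versus $j$.

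It remains that non-tree edges are not tight, i.e.\ $\widehat w_{lr}(a)\ge\min$ for all $l\in L_p$. Suppose some edge $lr$ with $l\in L_p$ were strictly cheaper in adjusted cost. Then modifying $A_{d^{(l)}}$, or $A_d$, at $r$ to use $lr$ would yield an assignment of load $d^{(l)}$ or nearby whose cost beats the canonical one. This contradicts $w(d^{(l)})=w(d)-a_l$ by Lemma~\ref{lem:dual}. I expect this verification to be the main obstacle; I would first try the exchange argument on the path in $F_d$ from $r$ to $l$, using Lemma~\ref{lem:canonical}.

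\emph{Minimality.} Any price vector $b$ with $b_p=0$ making $A_d$ tight satisfies $w(d^{(j)})\le w(d)-b_j$ for all $j$. This holds because the assignment obtained from $A_d$ by shifting one unit from $p$ to $j$ along a path has adjusted cost at least that of $A_d$. Equivalently, $b_j\ge a_j$, which gives coordinatewise leastness.

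\emph{Injectivity and computability.} For fixed $p$, suppose two loads $d\ne d'$ gave the same $a$. Then by the tightness step they would have the same tight graph $F_d$. By the rooted structure, the loads are then read off as children counts in that tree, so $d=d'$. Computability is immediate: we solve $|L_p|+1$ min-cost matchings to get $w(d)$ and all $w(d^{(j)})$, and take the union of the resulting assignments.
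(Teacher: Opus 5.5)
\textbf{There is a genuine gap.} Your bookkeeping matches the paper's: the rooted structure via the path rule with $j=p$, minimality by comparing $A_d$ with $A_{d^{(j)}}$, injectivity by reading loads off degrees, and computability via matchings. However, the central claim is never proved. That claim is that $a^{(d)}$ makes $A_d$ tight, i.e.\ $\widehat w_{lr}(a^{(d)})\ge\widehat w_{A_d(r)r}(a^{(d)})$ for all $l\in L_p$. You flag it as the main obstacle, and the exchange you sketch does not work as stated. Reassigning $r$ from $i=A_d(r)$ to $l$ in $A_d$ gives load $d-\mathbf e_i+\mathbf e_l$, which equals $d^{(l)}$ only when $i=p$. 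Reassigning $r$ in $A_{d^{(l)}}$ gives load $d^{(l)}-\mathbf e_{A_{d^{(l)}}(r)}+\mathbf e_l\ne d^{(l)}$. Neither cost is controlled by the prices $a^{(d)}$. Also, in your telescoping step, ``both subgraphs of $F_d$, hence both tight'' assumes what is to be shown. The identity $w(d^{(j)})=w(d)-a_j$ for your tree potentials does hold, but only because tree edges at each $r$ have equal adjusted cost. Tightness against non-tree edges is still open.

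The missing idea is to modify the right neighboring assignment. With $i=A_d(r)$, Lemma~\ref{lem:neighbor-assignments} gives $A_{d^{(i)}}(r)=i$: either the tree path from $r$ to $i$ is the single edge $ri$, or use the last statement with $j=p$. Reassigning $r$ from $i$ to $l$ in $A_{d^{(i)}}$ yields an assignment of load exactly $u+\mathbf e_l=d^{(l)}$. Hence $w(d^{(l)})\le w(d^{(i)})-w_{ir}+w_{lr}$, which rearranges to $\widehat w_{ir}(a^{(d)})\le\widehat w_{lr}(a^{(d)})$. This is the paper's argument.

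It also makes your tree-potential construction unnecessary. Once $A_d$ is tight, each $A_{d^{(j)}}$ has the same total adjusted cost as $A_d$, since its load differs by $\mathbf e_j-\mathbf e_p$ and $a^{(d)}_j=w(d)-w(d^{(j)})$. So each $A_{d^{(j)}}$ is tight, giving $\mathcal T_{L_p}(u)\subseteq F(a^{(d)})$, and Lemma~\ref{lem:forest} forces equality.

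Finally, the displayed inequality in your minimality step is reversed. Tightness of $A_d$ at $b$ gives $w(d)+d\cdot b\le w(d^{(j)})+d^{(j)}\cdot b$, i.e.\ $w(d^{(j)})\ge w(d)-b_j$, and this is what yields $b_j\ge a^{(d)}_j$. The justification is that the tight $A_d$ minimizes adjusted cost over all assignments, applied to $A_{d^{(j)}}$ itself. A bound on an arbitrary ``shifted'' assignment of load $d^{(j)}$ would not bound $w(d^{(j)})$ from below.
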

\begin{proof}
If $A_d(r)=i$, Lemma~\ref{lem:neighbor-assignments} gives $A_{d^{(i)}}(r)=i$. Reassigning $r$
from~$i$ to~$j$ in~$A_{d^{(i)}}$ yields an assignment of load~$d^{(j)}$ whose cost is at least that of~$A_{d^{(j)}}$.
Consequently, for every $j\in L_p$ we have
\[
w(d^{(j)})\le w(d^{(i)})+w_{jr}-w_{ir}
\ \text{ and  }\
\widehat w_{ir}(a^{(d)})\le\widehat w_{jr}(a^{(d)}).
\]
Thus $A_d(r) = i$ is a tight edge for every $r$, and $a^{(d)}_p=0$ by definition (since $d^{(p)} = d$). 

If a different price vector $a'$ with $a'_p=0$ also makes $A_d$
tight, since a tight assignment minimizes total adjusted cost, and $A_{d^{(j)}}$ has load $d-\mathbf e_p+\mathbf e_j$, we have
\[
w(d)+d\cdot a'
\le w(d^{(j)})+(d-\mathbf e_p+\mathbf e_j)\cdot a'.
\]
Thus $a'_j\ge w(d)-w(d^{(j)})=a^{(d)}_j$ for every $j$, showing that $a^{(d)}$ is coordinatewise minimal.

At $a^{(d)}$, each $A_{d^{(j)}}$ has the same adjusted cost as the
tight assignment~$A_d$, so it is tight as well. Their union is the spanning tree $F_d$ by Lemma~\ref{lem:neighbor-assignments}.
By Lemma~\ref{lem:forest}, the tight graph is a forest, so it
equals this tree. Its degree at $p$ is
$u_p + 1 = d_p$ and at each $l>p$ is $u_l + 1 = d_l+1$, giving the desired structure when rooted at $p$.

Given some price $a^{(d)}$, we can compute the tight graph directly, and its vertex degrees allow us to recover $d$. 
Given $d$, one can compute each $A_{d^{(j)}}$ using the perfect matching algorithm, and hence also the tight graph $F_d$ using the union definition in
\eqref{eq:tree-union} in polynomial time.
\end{proof}

\begin{figure}[!ht]
\begin{center}
\begin{tikzpicture}[
  x=1.65cm, y=0.7cm,
  vertex/.style={circle,draw,minimum size=5mm,inner sep=0pt,font=\scriptsize}
]

\newcommand{\drawvertices}[1]{%
  \foreach \i in {1,...,5}
    \node[vertex] (l#1\i) at (0,5-\i) {$l_{\i}$};
  \foreach \j in {1,...,4}
    \node[vertex] (r#1\j) at (1,4.5-\j) {$r_{\j}$};
}

\begin{scope}[shift={(0,6)}]
  \drawvertices{a}
  \draw (la2)--(ra1) (la4)--(ra2) (la4)--(ra3) (la5)--(ra4);
  \node at (0.5,-0.85) {$A_{d^{(2)}}=A_d$};
\end{scope}
\begin{scope}[shift={(2.5,6)}]
  \drawvertices{b}
  \draw (lb3)--(rb1) (lb4)--(rb2) (lb4)--(rb3) (lb5)--(rb4);
  \node at (0.5,-0.85) {$A_{d^{(3)}}$};
\end{scope}
\begin{scope}[shift={(0,0)}]
  \drawvertices{c}
  \draw (lc4)--(rc1) (lc4)--(rc2) (lc4)--(rc3) (lc5)--(rc4);
  \node at (0.5,-0.85) {$A_{d^{(4)}}$};
\end{scope}
\begin{scope}[shift={(2.5,0)}]
  \drawvertices{e}
  \draw (le4)--(re1) (le4)--(re2) (le5)--(re3) (le5)--(re4);
  \node at (0.5,-0.85) {$A_{d^{(5)}}$};
\end{scope}

\begin{scope}[shift={(5,3)}]
  \drawvertices{f}
  \draw (lf2)--(rf1) (lf3)--(rf1) (lf4)--(rf1)
        (lf4)--(rf2) (lf4)--(rf3) (lf5)--(rf3) (lf5)--(rf4);
  \node at (0.5,-1) {$F_d=\displaystyle\bigcup_{j=2}^{5} A_{d^{(j)}}$};
\end{scope}
\end{tikzpicture}
\end{center}
\vspace{-15pt}
\caption{An example of a canonical assignment $A_d$ with load $d = (0, 1, 0, 2, 1)$, its three neighboring canonical assignments, and a tight graph $F_d$ that is the union of these four assignments.}\label{fig:tight-tree}
\vspace{-10pt}
\end{figure}

\subsection{Geometric interpretation}

This subsection is intended to provide some geometric background for the assignment construction in this section, and can be skipped safely.

For each $r\in R$, put
$h_r(a)=\min_{l\in L}(w_{lr}+a_l)$.
The locus of $a\in \R^L$ for which this minimum is attained at least twice is called a \emph{tropical hyperplane}. Collecting all tropical hyperplanes for each $r$ gives an \emph{arrangement} of tropical hyperplanes. For each point $a$, the minimizing indices $l$ for each $r$ form the \emph{type} of $a$, representing the relative position of $a$ with respect to the tropical hyperplane arrangement. The tight graph~$F(a)$ encodes the same information as this type using edges of $L\times R$. In particular,
\begin{equation}\label{eq:tropical-product}
\sum_{r\in R}h_r(a)
=\min_{\substack{d\in\mathbb Z_{\ge0}^L\\\sum_l d_l=k}}
\left(w(d)+\sum_{l\in L}d_la_l\right),
\end{equation}
where $w(d)$ is the minimum assignment cost with load~$d$.
The right-hand side is the min-plus tropical polynomial whose
coefficients are the assignment values~$w(d)$. We refer the reader to \cite{DS04} for further connections to tropical geometry.

The same matrix $w$ lifts each vertex $(e_l,e_r)$ of a product of two simplices $\Delta_{m-1}\times\Delta_{k-1}$
to height~$w_{lr}$. Tight graphs encode the lower faces of the
resulting polytope, corresponding to a regular subdivision of $\Delta_{m-1}\times\Delta_{k-1}$, and genericity of $w$ ensures that this is a triangulation. The simplices of this triangulation correspond to bipartite
forests, and its maximal simplices to spanning trees
\cite{GNP24}. 
The combinatorial structure of these triangulations has been studied extensively, and many properties also extend to non-regular triangulations. In this general context, the assignments $A_d$ are
called \emph{semi-matchings} or \emph{topes}, and $\mathcal T_S(u)$ is called a
\emph{tope-linkage covector}, which is always a tree due to the \emph{linkage axiom/property}. We refer the reader to~\cite{Yao25} for a more comprehensive overview.

\section{Enumeration of compatible subsets}\label{sec:listing}

\begin{definition}
A vertex set $U\subseteq L\cup R$ is \emph{compatible} with an assignment~$A$ if
$r\in U$ exactly when $A(r)\in U$, for every~$r\in R$. Equivalently, $U$ is a union of components of the assignment graph, including isolated vertices of~$L$. 

For generic~$w$, let $\cC(w)$ be the family of sets compatible
with some canonical assignment $A_d$ with $d$ balanced. Complementary sets are treated as distinct members of~$\cC(w)$.

For a fixed assignment~$A$, a compatible set can be identified with a $\{0, 1\}$-coloring of $L\cup R$ where every $r\in R$ is given the same color as $A(r)\in L$.
\end{definition}

The key insight to improving the classic Bellman--Held--Karp algorithm is that it suffices to perform dynamic programming over sets in $\cC(w)$, since an optimal Hamiltonian cycle contains some canonical assignment, which decomposes into a collection of connected components, any sub-collection of which belongs to $\cC(w)$. It therefore suffices to generate $\cC(w)$ efficiently.

\subsection{Generating all compatible subsets}

Listing the compatible subsets separately for each assignment would repeat many of them. We describe how to avoid these repetitions below.

Fix a balanced load $d$ and use
the notation $p=p(d)$, $F_d$, and $A_d$ from
Subsection~\ref{subsec:neighbor}.
For each $j>p$ with $d_j\le1$, let $r_j\in R$ be the unique vertex newly
assigned to~$j$ by $A_{d-\mathbf e_p+\mathbf e_j}$ relative to~$A_d$. By 
Lemma~\ref{lem:neighbor-assignments}, $jr_j$ is the unique edge in $F_d$ incident to~$j$ that is not in $A_d$, hence is the parent edge of $j$ in~$F_d$. Define the forest on~$L$
\begin{equation}\label{eq:color-forest}
T_d=\bigl\{\{j,A_d(r_j)\}:j>p,\ d_j\le1\bigr\}.
\end{equation}
Equivalently, $T_d$ is a selectively pruned contraction of the tree $F_d$: contract all assignment edges in~$F_d$, delete the
parent edges at vertices $j>p$ with $d_j=2$, and adjoin
$1,\ldots,p-1$ as isolated vertices. Note that each
component contains exactly one vertex $j\le p$ or with $d_j=2$.

Consider all proper $\{0, 1\}$-colorings of $T_d$ (where adjacent vertices are given opposite colors). For each coloring, if $S\subseteq L$ consists of the
vertices colored $1$, output
$U=S\cup A_d^{-1}(S)$. Observe that $U$ is clearly compatible with $A_d$.

Intuitively speaking, each edge of $T_d$ corresponds to a possible way to reassign a vertex $r\in R$ to get another canonical assignment with a neighboring balanced load. (The deleted edges after contracting $T_d$ by $A_d$ correspond to reassignments that cause a load to be un-balanced.) The proper coloring thus ensures that no valid single-vertex reassignment is compatible with the same coloring. We now use the same intuition to show that every coloring in $\cC(w)$ is still covered despite this restriction.

\begin{figure}[!ht]
\centering
\begin{tikzpicture}[
  vertex/.style={circle,draw,minimum size=6.5mm,inner sep=0pt,font=\small},
  rightvertex/.style={vertex,fill=black!6},
  assignment/.style={line width=1.7pt},
  tight/.style={line width=0.65pt},
  step-arrow/.style={-{Latex[length=2mm]},line width=0.65pt}
]

  \begin{scope}[shift={(0,0)}]
    \node[vertex] (l1a) at (-0.45,4.45) {$l_1$};
    \node[vertex] (l2a) at (1.00,4.45) {$l_2$};
    \node[rightvertex] (r1a) at (1.00,3.60) {$r_1$};
    \node[vertex] (l3a) at (0.00,2.75) {$l_3$};
    \node[vertex] (l4a) at (2.05,2.75) {$l_4$};
    \node[rightvertex] (r2a) at (1.45,1.90) {$r_2$};
    \node[rightvertex] (r3a) at (2.65,1.90) {$r_3$};
    \node[vertex] (l5a) at (2.65,1.05) {$l_5$};
    \node[rightvertex] (r4a) at (2.65,0.20) {$r_4$};
    \draw[tight] (r1a)--(l3a) (r1a)--(l4a) (r3a)--(l5a);
    \draw[assignment] (l2a)--(r1a) (l4a)--(r2a)
                      (l4a)--(r3a) (l5a)--(r4a);
    \node[font=\small] at (1.10,-0.5) {$F_d$};
  \end{scope}

  \draw[->] (3.02,2.50)--(4.35,2.50)
    node[midway,above,font=\scriptsize] {contract $A_d$};

  \begin{scope}[shift={(5.00,0)}]
    \node[vertex] (l1b) at (-0.45,4.45) {$l_1$};
    \node[vertex] (l2b) at (1.00,4.45) {$l_2$};
    \node[vertex] (l3b) at (0.00,2.75) {$l_3$};
    \node[vertex] (l4b) at (2.05,2.75) {$l_4$};
    \node[vertex] (l5b) at (2.65,1.05) {$l_5$};
    \draw[tight] (l2b)--(l3b) (l2b)--(l4b) (l4b)--(l5b);
  \end{scope}

  \draw[->] (8.02,2.50)--(9.35,2.50)
    node[midway,above,align=center,font=\scriptsize] {delete parent\\edge of $l_4$};

  \begin{scope}[shift={(10.00,0)}]
    \node[vertex] (l1c) at (-0.45,4.45) {$l_1$};
    \node[vertex] (l2c) at (1.00,4.45) {$l_2$};
    \node[vertex] (l3c) at (0.00,2.75) {$l_3$};
    \node[vertex] (l4c) at (2.05,2.75) {$l_4$};
    \node[vertex] (l5c) at (2.65,1.05) {$l_5$};
    \draw[tight] (l2c)--(l3c) (l4c)--(l5c);
    \node[font=\small] at (1.10,-0.5) {$T_d$};
  \end{scope}
\end{tikzpicture}
\caption{Constructing the forest $T_d$ from the example tight graph $F_d$ from Figure~\ref{fig:tight-tree}. The contracted edges of $A_d$ are bolded in $F_d$.}\label{fig:construct-forest}
\end{figure}

\begin{proposition}\label{prop:enumeration}
Over all balanced load vectors $d$, the process above generates every set in~$\cC(w)$ at least once. Its number of outputs, counted with multiplicity, is $N_{m,k}:=\sum_d 2^{b(d)}$, where $b(d):=p(d)+|\{l>p(d):d_l=2\}|$ is the number of connected components of $T_d$.
The list can be generated deterministically in $O^*(N_{m,k})$ time.
\end{proposition}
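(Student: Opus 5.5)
The plan has three parts. First, coverage: every $U\in\cC(w)$ must appear as an output. Second, the count of outputs. Third, the running time.

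\emph{Counting and running time.} $T_d$ is a forest on $L$ with exactly $b(d)$ components, since each component contains exactly one vertex $j\le p$ or $j$ with $d_j=2$. So $T_d$ has exactly $2^{b(d)}$ proper $\{0,1\}$-colorings, and the number of outputs is $\sum_d 2^{b(d)}=N_{m,k}$.

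For the time bound, I enumerate all balanced $d$ with $\sum_l d_l=k$ by a polynomial-delay recursion. Their number is at most $N_{m,k}$, because each contributes at least $2^{b(d)}\ge 2$ outputs. For each $d$ I compute $A_d$, $F_d$ and the vertices $r_j$ in polynomial time by Lemma~\ref{lem:least}. I then read off $T_d$ from \eqref{eq:color-forest}. Finally I list its colorings by choosing a color for the root of each component and propagating; each output $U=S\cup A_d^{-1}(S)$ costs polynomial time.

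\emph{Coverage: the exchange step.} Let $U$ be compatible with $A_d$ for some balanced $d$, and let $\chi$ be the induced coloring of $L$. If $\chi$ is proper on $T_d$, then $U$ is output for $d$. Otherwise some edge $\{j,i\}$ of $T_d$ is monochromatic, where $j>p$, $d_j\le1$ and $i=A_d(r_j)$. Set $d'=d-\mathbf e_i+\mathbf e_j$, which is again balanced. I claim that $A_{d'}$ is $A_d$ with the single vertex $r_j$ moved from $i$ to $j$.

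To see this, note that $jr_j\in F_d$, so this edge is tight at $a^{(d)}$ by Lemma~\ref{lem:least}. Hence the modified assignment uses only tight edges. By Lemma~\ref{lem:dual} it is of minimum cost for its load $d'$, and by Lemma~\ref{lem:canonical} it equals $A_{d'}$. Since $\chi(i)=\chi(j)$, $U$ is still compatible with $A_{d'}$. So a non-proper coloring always lets us pass to another balanced load whose canonical assignment is still compatible with $U$.

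\emph{Coverage: termination (the main obstacle).} I need a potential that strictly improves under each exchange step, so that a suitably extremal $d$ among those compatible with $U$ must yield a proper coloring. The exchange moves one unit of load from the parent side $i$ to the child $j$ in the rooted tree $F_d$, with $j>p$.

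My first attempt is to compare the normalized prices. Because $jr_j$ and $ir_j$ are both tight, $w(d')-w(d)=a^{(d)}_i-a^{(d)}_j$. I would try to show, using the minimality of $a^{(d)}$ in Lemma~\ref{lem:least} and the path description in Lemma~\ref{lem:neighbor-assignments}, that the new least prices satisfy $a^{(d')}\ge a^{(d)}$ coordinatewise, with strict inequality at $j$, after handling the renormalization when $p$ changes, i.e.\ when $i=p$ and $d_p=1$. Since the prices determine $d$ for fixed $p$, this would make $(p(d),\sum_l a^{(d)}_l)$ a strictly monotone potential.

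If that fails, the fallback is a combinatorial potential: the sum over $r\in R$ of the index of $A(r)$ together with $p$, ordered lexicographically. I would then check directly that moving load away from the root direction increases it, using the fact that each $l>p$ has its parent in $R$.

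Once termination is established, choosing an extremal $d$ gives a proper coloring of $T_d$, so $U$ is generated.
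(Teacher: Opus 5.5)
Your counting argument, running-time argument and exchange step are fine and match the paper. Moving $r_j$ from $i=A_d(r_j)$ to $j$ uses only edges of $F_d$, so it yields $A_{d'}$ with $d'=d-\mathbf e_i+\mathbf e_j$ balanced and still compatible with $U$. One small point: $a^{(d)}$ is only defined on $L_p$. Lemma~\ref{lem:dual} still applies because every assignment of load $d'$ is supported on $L_p$.

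\textbf{The gap is in termination.} The inequality you set out to prove, $a^{(d')}\ge a^{(d)}$ with strict inequality at $j$, is false; the true inequality goes the other way. Your fallback index-sum potential is not monotone either. Here is a concrete example. Take $L=\{1,2,3\}$, $R=\{r,r'\}$, $w_{1r}=0$, $w_{2r}=2$, $w_{3r}=1$, $w_{lr'}=0$ for all $l$; this $w$ is generic. Let $d=(1,0,1)$. Then:
\begin{itemize}
\item $A_d=\{r\mapsto1,\ r'\mapsto3\}$, $F_d$ is the path $1\,r\,3\,r'\,2$, $T_d$ has edges $\{2,3\}$ and $\{1,3\}$, and $a^{(d)}=(0,-1,-1)$.
\item For $U=\{1,r\}$, the edge $\{2,3\}$ is monochromatic. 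The exchange moves $r'$ to $2$ and gives $d'=(1,1,0)$ with $a^{(d')}=(0,-2,-1)\le a^{(d)}$.
\item The index sum of the assignment drops from $4$ to $3$ while $p$ stays $1$. The labeling of $L$ is arbitrary and need not increase away from the root of $F_d$.
\item The only $U$-compatible balanced loads are $d$ and $d'$. An extremal choice in your direction (maximize the price sum) picks $d$, whose coloring of $T_d$ is improper. The opposite choice picks $d'$, and $U$ properly colors $T_{d'}$, whose edges are $\{1,2\},\{1,3\}$.
\end{itemize}

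\textbf{The missing step.} Apply the minimality in Lemma~\ref{lem:least} to $d'$, not to $d$. You already showed that $A_{d'}$ is tight at $a^{(d)}$.
\begin{itemize}
\item Since $j>p$, only a load at $i$ decreases, so $p(d')\ge p(d)$.
\item If $p(d')=p$, then $a^{(d)}$ is a price vector with $a_p=0$ that makes $A_{d'}$ tight. Minimality gives $a^{(d')}\le a^{(d)}$ coordinatewise.
\item The two price vectors differ, because for fixed $p$ distinct loads have distinct normalized prices.
\end{itemize}
So choose $d$ compatible with $U$ that maximizes $p(d)$ and then minimizes $\sum_{j\in L_p}a^{(d)}_j$. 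For this $d$, no monochromatic edge of $T_d$ can exist, so $U$ is generated. This is exactly the paper's argument.
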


\begin{proof}

Fix $U\in\cC(w)$ and choose a compatible balanced load~$d$ maximizing $p=p(d)$ and then minimizing $\sum_{j\in L_p}a^{(d)}_j$.
Assume for the sake of contradiction that an edge $lA_d(r_l)$ of~$T_d$ has both endpoints on the same side of~$U$ (where $l$ is further away from the root in $F_d$), then clearly $r_l$ is also on the same side as well. 
Now reassign $r_l$ from $A_d(r_l)$ to~$l$.
The old and new assignment edges both lie in~$F_d$, so the new assignment~$A'$ is compatible with~$U$ and minimum-cost for its load~$d'\ne d$: indeed, every assignment of load~$d'$ is supported on~$L_p$. Thus $A'=A_{d'}$ by Lemmas~\ref{lem:dual} and~\ref{lem:canonical}, and its loads remain at most two.
As $p(d')\ge p$, the choice of~$d$ forces $p(d')=p$. Since $A_{d'}$ is tight for~$a^{(d)}$, Lemma~\ref{lem:least} gives $a^{(d')}\le a^{(d)}$ coordinatewise, with equality only if $d'=d$. Hence the sum of prices strictly decreases, which is a contradiction. Therefore, $U$ induces a proper coloring on $T_d$, so the process generates~$U$.

Now list the balanced loads and store their
canonical assignments, with polynomial work per load. These give each forest~$T_d$ in polynomial time
and its $2^{b(d)}$ colorings with polynomial work per output. Since every
load contributes at least one output, total time is $N_{m,k}\,n^{O(1)}$.
\end{proof}

\subsection{Non-repetition}

In fact, with some more analysis we can show that every set is generated \emph{exactly} once using the process in the previous section, making it unnecessary to deduplicate the outputs. This subsection does not affect the algorithm or the runtime analysis, and can be skipped safely.

\begin{proposition}\label{prop:nonrepetition}
The same process as Proposition~\ref{prop:enumeration} generates every set in~$\cC(w)$ without repetition. Therefore, $|\cC(w)|=N_{m,k}$.
\end{proposition}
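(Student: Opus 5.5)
The plan is to show that the pair $(d,\text{coloring})$ producing a given $U$ is uniquely determined by $U$. Concretely, if $U$ is output from load $d$, then $d$ is exactly the balanced load compatible with $U$ that the proof of Proposition~\ref{prop:enumeration} singles out: it maximizes $p$ and then minimizes $\sum_{j\in L_p}a^{(d)}_j$. Once $d$ is fixed, the coloring is determined by $U\cap L$, since it is the indicator of $S=U\cap L$. So the whole task reduces to a uniqueness statement about loads.

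Fix $U$ and two balanced loads $d\neq d'$, both compatible with $U$, such that $U$ induces a proper coloring of both $T_d$ and $T_{d'}$. We aim for a contradiction. The first step is to show $p(d)=p(d')$. Suppose $p=p(d)<p(d')$. The component of $T_d$ containing $p$ must then carry a reassignment that moves the load at $p$ away while staying compatible with $U$. The tool here is the last part of Lemma~\ref{lem:neighbor-assignments}: vertices assigned to $i$ in $A_{u+\mathbf e_j}$ stay assigned to $i$ in $A_{u+\mathbf e_i}$. We want to conclude that some edge of $T_d$ is monochromatic. Our plan is to trace a path in the union $A_d\cup A_{d'}$. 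Both assignments are compatible with $U$, and the symmetric difference of two assignments decomposes into alternating paths, from a vertex where $d_l>d'_l$ to one where $d_l<d'_l$. We then project such a path onto $F_d$ using the tight prices $a^{(d)}$.

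The second step treats $p(d)=p(d')=p$. We compare the price vectors $a^{(d)}$ and $a^{(d')}$ and consider the coordinatewise minimum $a=\min(a^{(d)},a^{(d')})$, or alternatively a point on the segment between them. The tight graph $F(a)$ contains a tight assignment of some intermediate load $d''$. It agrees with $A_d$ on the vertices where $a^{(d)}$ is the smaller coordinate, and with $A_{d'}$ on the others. Compatibility with $U$ should transfer to $d''$, because every $r$ keeps a partner on its own side. Moreover $d''$ stays balanced: on each side its load is inherited coordinatewise from either $d$ or $d'$, up to exchanges along edges that can be controlled. Then $d''\neq d$ has smaller price sum than $d$, or the analogous statement holds relative to $d'$.

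Next we must convert this price comparison into a monochromatic edge of $T_d$ or $T_{d'}$. The idea is to move one step from $d$ toward $d''$. This is a single reassignment of some $r_l$ to $l$ along an edge of $F_d$ that keeps $U$ compatible, which is precisely a monochromatic edge of $T_d$. Such an edge is not deleted, since the resulting load stays balanced. A single step exists because the loads reachable by tight reassignments at prices between $a^{(d)}$ and $a$ form a connected path of neighboring loads. Geometrically, these are adjacent cells of the tropical arrangement.

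The main obstacle is this last conversion. We must guarantee that the first step from $d$ toward $d''$ neither uses a deleted edge (a target with load $2$) nor crosses $U$. The plan is to pick the first step using the largest price decrease. Lemma~\ref{lem:least} then shows the step lands on a load with a strictly smaller price vector, and the tight-tree structure of $F_d$ shows it is a parent edge $jr_j$ with $d_j\le1$.

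Given uniqueness, distinct colorings of the same $T_d$ give distinct $U$. This yields $|\cC(w)|=N_{m,k}$.
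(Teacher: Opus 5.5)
Your proposal has a genuine gap: both steps are announced but not proved, and the step you yourself call the main obstacle is the whole content of the proposition.

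In Step 1 you never say which edge of $T_d$ becomes monochromatic when $p(d)<p(d')$. ``Projecting'' alternating paths of $A_d\cup A_{d'}$ onto $F_d$ is a plan, not an argument.

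In Step 2 the construction at $a=\min(a^{(d)},a^{(d')})$ can be rescued. Set $A''(r)=A_d(r)$ if $h_r(a^{(d)})\le h_r(a^{(d')})$, and $A''(r)=A_{d'}(r)$ otherwise. This assignment is tight at $a$ and compatible with $U$. However, your justification is inaccurate in two ways:
\begin{itemize}
\item The loads of $A''$ are not inherited coordinatewise from $d$ or $d'$. They are only bounded above by $d_l$ or $d'_l$, and only under a consistent tie-break.
\item $p(d'')$ can exceed $p$, in which case the normalized prices are not comparable.
\end{itemize}
More importantly, a compatible balanced $d''$ with smaller prices is not yet a contradiction. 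The proper-coloring hypothesis on $T_d$ only forbids one-step moves $r_j\mapsto j$, along parent edges with $d_j\le1$, that stay on one side of $U$. Your claims that the loads between $a^{(d)}$ and $a$ form a connected chain of neighboring loads, and that a first step chosen by ``largest price decrease'' neither crosses $U$ nor targets a load-$2$ vertex, are exactly what must be shown. Nothing in the plan supplies them.

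The missing idea is an extremal argument on the two price vectors themselves. It needs no intermediate load $d''$ and no separate treatment of $p$.
\begin{enumerate}
\item With $p\le p'$, set $\delta=\max_{j\in L_{p'}}(a^{(d)}_j-a^{(d')}_j)$. Make $\delta>0$ by swapping $d,d'$ when $p=p'$, or by shifting $a^{(d')}$ down by a large constant when $p<p'$.
\item Pick a maximizer $l$ closest to $p'$ in $F_{d'}$. Every $F_d$-edge at $l$ stays tight at $a^{(d')}$.
\item If $d_l=2$, the three $F_d$-edges at $l$ all lie in $F_{d'}$. Since $F_{d'}$ has maximum degree three and degree at most two at $p'$, this forces $l\neq p'$ and puts $l$'s $F_{d'}$-parent edge in $F_d$.
\item If $d_l\le1$, properness of the coloring on $T_d$ makes the $F_d$-parent edge $lr_l$ cross $U$. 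So it cannot lie in the $U$-compatible $A_{d'}$, and it must be $l$'s parent edge in $F_{d'}$.
\item For this common edge $lr$, put $j=A_{d'}(r)$. Compare tightness of $lr$ at $a^{(d)}$ with tightness of both $lr$ and $jr$ at $a^{(d')}$. This gives $a^{(d)}_j-a^{(d')}_j\ge\delta$, so $j$ is another maximizer closer to $p'$, a contradiction.
\end{enumerate}
This uses properness only on $T_d$ and mere compatibility of $A_{d'}$, and it replaces your unproved path-of-moves claim with a local check at one well-chosen vertex.
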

\begin{proof}
Suppose $U$ is generated by distinct balanced loads $d,d'$, with $p=p(d)\le p'=p(d')$. If $p=p'$, the prices are distinct by Lemma~\ref{lem:least}; interchange the loads if necessary so that $\max_j(a^{(d)}_j-a^{(d')}_j)>0$.
If $p<p'$, subtract a sufficiently large common constant from~$a^{(d')}$; this preserves its tight graph and the argument below.

Let $\delta=\max_{j\in L_{p'}}(a^{(d)}_j-a^{(d')}_j)>0$, and choose a maximizing vertex $l\in L_{p'}$ closest to~$p'$ in~$F_{d'}$. Note that $l>p$; the case $l=p=p'$ is impossible since $a_{l}^{(d)} = a_{l}^{(d')} = 0$ by definition of normalized prices.

Every edge~$lr$ of~$F_d$ incident to~$l$ is also tight for~$a^{(d')}$: for $i\in L_{p'}$,
\[
a^{(d')}_l+w_{lr}=a^{(d)}_l+w_{lr}-\delta
\le a^{(d)}_i+w_{ir}-\delta\le a^{(d')}_i+w_{ir}.
\]
If $d_l=2$, then $l$ has three tight incident edges in $F_d$ and hence also in $F_{d'}$,
so $l\ne p'$ (since $p'$ only has degree 2) and there are no other incident edges in $F_{d'}$ (since the maximum degree is 3), 
and therefore its parent edge in $F_{d'}$ is common to both trees.
Otherwise, its parent edge in $F_d$ crosses~$U$ and cannot belong to the
compatible assignment~$A_{d'}$. This edge must therefore also be the parent edge in $F_{d'}$, again common to both trees and with $l\ne p'$.

In both cases, let $lr$ be this common parent edge and $j=A_{d'}(r)$ (i.e.\ the parent of $r$ in $F_{d'}$). Since $lr$ is tight for~$a^{(d)}$, while $lr$ and $jr$ are tight for~$a^{(d')}$, we have
\[
w_{lr}+a^{(d)}_l\le w_{jr}+a^{(d)}_j,\ 
w_{lr}+a^{(d')}_l=w_{jr}+a^{(d')}_j.
\]
Subtracting the two gives $\delta=a^{(d)}_l-a^{(d')}_l\le a^{(d)}_j-a^{(d')}_j\le\delta$,
so $j$ is also a maximizing vertex, but $j$ is closer to~$p'$ than~$l$ in $F_{d'}$, a contradiction to the choice of $l$. Hence different loads have disjoint outputs. It is also clear no load generates the same set more than once.
\end{proof}

\subsection{Upper bound of the number of outputs}\label{sec:count}

\begin{proposition}\label{prop:count}
For $1\le k\le m$, the number of outputs in Proposition~\ref{prop:enumeration} is
\begin{equation}\label{eq:count}
N_{m,k}=2^{k-m}\sum_{j=k}^{2m}[z^j](2+z+z^2)^m.
\end{equation}
Consequently, for every $z\ge1$,
\begin{equation}\label{eq:count-bound}
|\cC(w)|= N_{m,k}\le 2^{k-m}z^{-k}(2+z+z^2)^m.
\end{equation}
\end{proposition}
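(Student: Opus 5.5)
We need $N_{m,k}=\sum_d 2^{b(d)}$ over balanced loads $d$ with $\sum_l d_l=k$. Here $b(d)=p+|\{l>p: d_l=2\}|$ and $p=p(d)$ is the first index with positive load.

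The plan is to group the balanced loads by $p$. Fix $p$. Coordinates $1,\dots,p-1$ are zero, and $d_p\in\{1,2\}$. The remaining $m-p$ coordinates lie in $\{0,1,2\}$. Each coordinate $l>p$ contributes weight $1,1,2$ for $d_l=0,1,2$; in generating-function form that is $1+z+2z^2$. The coordinate at $p$ contributes $z+z^2$ with no extra factor of two, and the prefix contributes the factor $2^p$. Hence
\[
N_{m,k}=[z^k]\sum_{p=1}^m 2^p (z+z^2)(1+z+2z^2)^{m-p}.
\]

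To reach the stated form, substitute $z\mapsto 1/(2z)$ type rescalings. Note that $1+z+2z^2=z^2\,(2+z^{-1}+z^{-2})$. So $[z^k]$ of a product of such factors becomes a coefficient of $(2+y+y^2)$-powers in $y=1/z$, with the exponent reflected, i.e.\ $j\mapsto 2m-j$. This reflection suggests that the tail sum $\sum_{j\ge k}$ arises from telescoping the sum over $p$.

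Concretely, I would use the identity
\[
2(z+z^2)=(1+z+2z^2)\cdot\text{something}-\ldots
\]
More cleanly, I would prove the recursion directly. Let $Q_m(x)=\sum_{j}[x^j](2+x+x^2)^m$ restricted to $j\ge k$. I would verify that the claimed formula $f(m,k)=2^{k-m}\sum_{j\ge k}[z^j](2+z+z^2)^m$ satisfies the same recursion as $N$, obtained by conditioning on $d_1$:
\[
N_{m,k}=N^{(d_1=1)}+N^{(d_1=2)}+2N_{m-1,k}.
\]
The last term comes from $d_1=0$: the index $p$ shifts and $b$ gains $1$. When $d_1\in\{1,2\}$, we have $p=1$, and we get $2\cdot C_{m-1,k-d_1}$, where $C_{n,t}=[z^t](1+z+2z^2)^n$ counts the weighted tails.

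Expanding $(2+z+z^2)^m=(2+z+z^2)(2+z+z^2)^{m-1}$ and summing coefficients $j\ge k$ should split into exactly these pieces. The $2\cdot(\ldots)$ term gives $2^{1}\cdot 2^{k-m}\sum_{j\ge k}[z^j](\cdot)^{m-1}=2f(m-1,k)$. The $z$ and $z^2$ terms give boundary coefficients $[z^{k-1}]$ and $[z^{k-1}]+[z^{k-2}]$, which, after the reflection identity $[z^{j}](2+z+z^2)^n = 2^{j-n}[z^{2n-j}](1+z+2z^2)^n$, should match $2C_{m-1,k-1}+2C_{m-1,k-2}$. Checking that the boundary terms align exactly, including the base cases $k=m$ and $m=1$, is the main obstacle. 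I expect some bookkeeping with the tail sums to be needed there. If it fails, I would fall back to summing the closed form over $p$ and telescoping.

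The inequality \eqref{eq:count-bound} is then a Chernoff/Rankin step. For $z\ge1$ and $j\ge k$, we have $[z^j]P\le [z^j]P\cdot z^{j-k}$. Therefore
\[
\sum_{j\ge k}[z^j]P\le z^{-k}P(z),
\]
since the coefficients are nonnegative. Finally, $|\cC(w)|=N_{m,k}$ by Proposition~\ref{prop:nonrepetition}.
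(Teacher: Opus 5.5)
Your grouping by $p$ is correct and is exactly the paper's first line. An induction on $m$ by conditioning on $d_1$ is a workable alternative to the paper's closed-form summation. The gap is in the step you flag as the main obstacle. The identity you plan to use there, $[z^{j}](2+z+z^2)^n = 2^{j-n}[z^{2n-j}](1+z+2z^2)^n$, is false: for $n=1$, $j=0$ the left side is $2$ and the right side is $1$. Reversing coefficients gives $[z^j](2+z+z^2)^n=[z^{2n-j}](1+z+2z^2)^n$ with no power of two. The reflected indices cannot be matched against $C_{m-1,k-1}$ and $C_{m-1,k-2}$, so the boundary terms would not align.

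The missing idea is a \emph{rescaling}, not a reflection. With $P(z)=1+z+2z^2$ one has $2P(z/2)=2+z+z^2$, hence $[z^j](2+z+z^2)^n=2^{n-j}[z^j]P(z)^n$. This is what the paper uses. In the paper the tail $\sum_{j\ge k}$ arises as follows: sum the geometric series over $p$, cancel $1+z$ via $2-P(z)=(1+z)(1-2z)$, expand $1/(1-2z)$, and use $P(1/2)=2$ to turn the head sum $\sum_{j<k}$ into the tail.

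With the rescaling identity your recursion does close, but the bookkeeping needs repair. Put $q_{n,j}=[z^j](2+z+z^2)^n$ and $S_{n,k}=\sum_{j\ge k}q_{n,j}$. From $q_{m,j}=2q_{m-1,j}+q_{m-1,j-1}+q_{m-1,j-2}$ you get $S_{m,k}=4S_{m-1,k}+2q_{m-1,k-1}+q_{m-1,k-2}$. The constant term alone gives $2^{k-m}\cdot 2S_{m-1,k}=f(m-1,k)$, not $2f(m-1,k)$. The second copy comes from the full tails inside the $z$ and $z^2$ terms. Your two slips cancel, so your total is right. Then $2^{k-m}(2q_{m-1,k-1}+q_{m-1,k-2})=2C_{m-1,k-1}+2C_{m-1,k-2}$ term by term.

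Finally, the $d_1=0$ term invokes $N_{m-1,k}$ with $k$ up to $m>m-1$, which is outside the range $1\le k\le m$. Treating $k=m$ as a base case would need an independent closed form you don't have. Instead, prove the formula for all $k\ge1$, where it still holds (both sides vanish for $k>2m$), by induction on $m$ starting from $m=0$. Your Rankin step for \eqref{eq:count-bound} and the appeal to Proposition~\ref{prop:nonrepetition} are fine.
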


\begin{proof}
Introduce a formal variable $z$ to record the total load, with each choice of $d_l$ contributing a factor $z^{d_l}$. Let $P(z) = 1+z+2z^2$. Grouping the loads by their first positive index~$p$ and considering the contributions from vertices $l < p$, $l = p$, and $l > p$ separately, it is not difficult to see that 
\begin{align*}
N_{m,k} = \sum_{d} 2^{b(d)}
&=[z^k]\sum_{p=1}^m2^{p-1}(2z+2z^2)P(z)^{m-p}\\
&=[z^k]\,2z(1+z)\frac{2^m-P(z)^m}{2-P(z)}=[z^k]\frac{2z(2^m-P(z)^m)}{1-2z},
\end{align*}
where we use $2-P(z)=(1+z)(1-2z)$ for the last equality.
Since $P(1/2)=2$, we have
$2^m=\sum_{j=0}^{2m}2^{-j}[z^j]P(z)^m$. Expanding
$2z(1-2z)^{-1}=\sum_{t\ge1}(2z)^t$ therefore gives
\begin{align*}
N_{m,k}
= [z^k] (2^m - P(z)^m)\sum_{t\geq 1} (2z)^t 
&= 2^{k+m} - \sum_{t\geq 1}[z^{k-t}] 2^{t}P(z)^m \\
&= 2^k\Bigl(2^m-\sum_{j=0}^{k-1}2^{-j}[z^j]P(z)^m\Bigr)  =2^{k} \sum_{j=k}^{2m}2^{-j}[z^j]P(z)^m.
\end{align*}

Finally, $2P(z/2)=2+z+z^2$, so
$[z^j](2+z+z^2)^m=2^{m-j}[z^j]P(z)^m$, giving~\eqref{eq:count}.
The bound~\eqref{eq:count-bound} follows by using $z^j\ge z^k$
for $j\ge k$ and real~$z\ge1$.
\end{proof}

\paragraph{Exponential bound.}\label{subsec:evaluation}
For $k/n=\alpha$, the right side of~\eqref{eq:count-bound} is
\begin{equation*}
\left[2^{2\alpha-1}(2+z+z^2)^{1-\alpha}z^{-\alpha}\right]^n.
\end{equation*}
At $z=1$ the bracket equals~$2$, and its logarithmic derivative with respect to~$z$ at $z = 1$ is $(3-7\alpha)/4$. This is negative for $\alpha>3/7$, so $\gamma(\alpha)<2$. The expression tends to infinity as $z\to\infty$ for $\alpha\le1/2$, so the minimum is attained.

For fixed $z\ge1$, increasing~$\alpha$ multiplies the expression in~\eqref{eq:gamma} by a positive power of $4/[z(2+z+z^2)]\le1$, so the function~$\gamma$ is nonincreasing for~$\alpha \in (3/7,1/2]$. In particular, if $k/n\ge\alpha$, then $N_{m,k}\le\gamma(\alpha)^n$.

\section{Restricted dynamic programming}\label{sec:dp}

We restrict the Held–Karp state space to compatible subsets $\cC(w)$. Since every connected component of an assignment has at most three vertices, transitions may append up to three vertices at a time, and can be computed in polynomial time.

We assume $n\ge3$ and fix an arbitrary start vertex $s\in R$.
Use states $(U,v)$ with $U\in\cC(w)$, $s\in U$, and $v\in U\setminus\{s\}$,
together with the initial state $(\{s\},s)$. Each entry stores the minimum
cost of an admitted path from~$s$ to~$v$ visiting exactly the set~$U$.
Set $\mathrm{DP}(\{s\},s)=0$ and all other entries to~$+\infty$.

Process states in the order of increasing $|U|$. From $\mathrm{DP}(U,u)$, consider every simple path $P=(u,x_1,\ldots,x_t)$ with $1\le t\le3$ and $x_i\notin U$. Let $U'=U\cup\{x_1,\ldots,x_t\}$; if $(U',x_t)$ is a stored state, update
\begin{equation}\label{eq:dp}
\mathrm{DP}(U',x_t)\leftarrow
\min\bigl\{\mathrm{DP}(U',x_t),\ \mathrm{DP}(U,u)+w(P)\bigr\}.
\end{equation}
All paths use available edges of~$G$. Close each full path by an edge
back to~$s$: the answer is
\[
\min_{v\in N_G(s)}\bigl(\mathrm{DP}(V,v)+w_{vs}\bigr),
\]
and the tour can be recovered by looking at predecessors. An infinite value means no tour exists.

\begin{proposition}\label{prop:generic}
Given generic costs~$w$ and an independent set~$R$, the procedure solves TSP deterministically in $O^*(N_{m,k})$ time.
\end{proposition}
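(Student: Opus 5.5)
The proof has two parts: correctness, that the restricted dynamic program still finds an optimal tour, and running time. Soundness is immediate: every finite DP value is the cost of a genuine simple path in $G$ from $s$ visiting exactly $U$. Transitions use only available edges and fresh vertices, so the final minimum is the cost of an actual Hamiltonian cycle or $+\infty$.

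\textbf{Completeness.} Let $H$ be a minimum-cost tour. By Theorem~\ref{thm:exposure}, $H$ contains a canonical assignment $A_d$ with $d$ balanced. Each component of $A_d$ is a star centered at some $l\in L$ with $d_l\le2$ leaves in $R$, so it has at most three vertices. Because $A_d\subseteq H$, each component with $d_l\ge1$ is a subpath of $H$: a lone edge $lr$, or a path $r\,l\,r'$ using both tour edges at $l$. Isolated vertices of $L$ are one-vertex blocks. Thus $H$ is partitioned into consecutive blocks of at most three vertices each, and every block is a component of $A_d$.

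Traverse $H$ from $s\in R$, starting along $s$'s assignment edge; this direction is chosen so that $s$ begins its block. If $d_{A_d(s)}=2$, $s$ is an endpoint of the path $s\,l\,r'$, so such a direction exists. Every prefix ending at a block boundary is then a union of components of $A_d$, hence lies in $\cC(w)$. The only exception is the initial state $\{s\}$, which is handled separately. The first transition appends the rest of $s$'s block, and later transitions append whole blocks of one to three vertices, always via tour edges. The endpoint $v\neq s$ of each such prefix is a stored state. Induction on the number of blocks shows $\mathrm{DP}(U,v)$ is at most the cost of the prefix of $H$. At $U=V$, closing by the edge $vs$ gives a value at most $w(H)$. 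Equality follows from soundness. Predecessor pointers recover the tour.

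\textbf{Running time.} By Proposition~\ref{prop:enumeration}, the sets in $\cC(w)$ are listed in $O^*(N_{m,k})$ time. I would store them in a dictionary, say a trie keyed on the characteristic vector, with polynomial lookup. Each set contributes at most $n$ states. Each state has $O(n^3)$ candidate paths of length at most three, and each candidate needs one lookup. Sorting states by $|U|$ costs polynomial overhead per state. Hence the total is $O^*(N_{m,k})$, since duplicates in the list only affect polynomial factors after deduplication.

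The main obstacle is the block argument: one must check carefully that components of $A_d$ appear contiguously in $H$ and that the first block can always be started at $s$. Everything else is bookkeeping.
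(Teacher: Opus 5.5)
Your proposal is correct and takes essentially the same route as the paper. Soundness holds because finite entries are genuine simple paths. For completeness, you orient the optimal tour $H\supseteq A_d$ (from Theorem~\ref{thm:exposure}) along the assignment edge at $s$, so that every block-boundary prefix is a union of components of $A_d$. The running time follows from Proposition~\ref{prop:enumeration} together with a deterministic dictionary holding $O(nN_{m,k})$ states with $O(n^3)$ candidate extensions each. Your extra details (singleton blocks for unloaded vertices of $L$, the separate initial state $\{s\}$, and the induction over blocks) only make explicit what the paper leaves implicit.
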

\begin{proof}
Every finite entry represents a simple path with vertex set~$U$,
so closing a full path gives a feasible tour.

By Theorem~\ref{thm:exposure}, a minimum-cost tour~$H$ contains some~$A_d$.
Orient~$H$ to traverse the assignment edge at~$s\in R$ first. The
connected components of~$A_d$ are consecutive blocks of at most three vertices,
and~$s$ is an endpoint of its block. Complete this first block from the
initial state, then append the remaining blocks. Every block union is
compatible with~$A_d$, so~\eqref{eq:dp} admits this path and its closing edge.

Generate the sets in $\cC(w)$ using
Proposition~\ref{prop:enumeration}. Store the subsets by cardinality in a
deterministic dictionary keyed by $n$-bit strings. There are
$O(nN_{m,k})$ states, each with $O(n^3)$ candidate extensions,
giving $N_{m,k}n^{O(1)}$ time.
\end{proof}

The algorithm below includes the step for handling arbitrary inputs, justified in Section~\ref{sec:general}. On such inputs the family is $\cC(\widetilde w)$ for the completed symbolic costs, while the dynamic program uses original edges and costs.

\begin{algorithm}[Deterministic TSP with an independent set]\label{alg:tsp}
Given a simple graph $G=(V,E)$ with costs~$w$ and an independent set
$R$ of size at least $\alpha|V|$, proceed as follows.
\begin{enumerate}
\item Set $L=V\setminus R$, $m=|L|$, and $k=|R|$; if $k>m$ or $|V| < 3$, report no tour and stop.
\item Fix $s\in R$, label $L=\{1,\ldots,m\}$, and construct the
completed costs $\widetilde{w}$ as in~\eqref{eq:lex-weight}. Initialize an empty deterministic
dictionary $\cC(\widetilde{w})$ of vertex subsets.
\item List all loads $d\in\{0,1,2\}^L$ with $\sum_l d_l=k$.
Compute and store each~$A_d$ by minimum-cost perfect matching with
costs~$\widetilde w$.
\item For each listed load, form the forest~$T_d$ from~\eqref{eq:color-forest}.
For every proper two-coloring $c\colon L\to\{0,1\}$ of~$T_d$ with
$c(A_d(s))=1$, put
$S=c^{-1}(1)$ and insert $S\cup A_d^{-1}(S)$ into the dictionary.
\item Run~\eqref{eq:dp} with $\mathrm{DP}(\{s\},s)=0$, using
the original edges and costs. Minimize $\mathrm{DP}(V,v)+w_{vs}$ over
edges $vs\in E$ and recover a tour, or report nonexistence if the
minimum is infinite.
\end{enumerate}
\end{algorithm}

If needed, obtain~$R$ by computing a maximum independent set in
$O^*(2^{n/3})$ time~\cite{TT77}. Its time is absorbed by
the stated bound since $2^{1/3}<\gamma(1/2)\le\gamma(\alpha)$.

\section{Arbitrary costs and edges}\label{sec:general}

We remove the assumptions of complete cross-edges and generic costs by introducing deterministic perturbations to the costs.

Fix an independent set $R$ with $k=|R|\le m=|L|$.
Add all missing $L$--$R$ edges, retaining the original edges within~$L$
and adding none within~$R$. Assign distinct nonnegative identifiers
$\iota(e)<n^2$ to the completed graph edges and use lexicographically
ordered costs
\begin{equation}\label{eq:lex-weight}
\widetilde w_e=
\begin{cases}
(0,w_e,2^{\iota(e)}),&e\in E,\\
(1,0,2^{\iota(e)}),&e\notin E.
\end{cases}
\end{equation}
The first coordinate counts added edges, the second gives the original
cost, and the third breaks ties. Thus, whenever an original tour exists,
every symbolic minimum-cost tour is an original optimum. Every alternating
cycle has a nonzero third coordinate, since distinct powers of two have
nonzero signed sum, so the cross-edge matrix is generic. 

This is a symbolic perturbation in the spirit of simulation of
simplicity~\cite{EM90}. The next lemma supplies the real realization
needed for the structural arguments.

\begin{lemma}\label{lem:realization}
For each fixed input, real edge costs with a generic cross-edge matrix can reproduce the canonical assignments, coloring forests, and minimum-cost tours determined by~\eqref{eq:lex-weight}. The algorithm does not compute such a realization.
\end{lemma}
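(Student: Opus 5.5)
The plan is to replace the lexicographic triples in \eqref{eq:lex-weight} by real numbers
\[
\widetilde w^{\varepsilon}_e = M\cdot c_1(e)+c_2(e)+\varepsilon\, c_3(e),
\]
where $(c_1,c_2,c_3)$ is the triple assigned to~$e$, $M$ is a large real and $\varepsilon>0$ is small. I will then show that for suitable $M,\varepsilon$ the real costs order every relevant finite family of objects exactly as the lexicographic order does. The objects that matter are all assignments of each load, all Hamiltonian cycles of the completed graph, and all alternating cycles in $K_{L,R}$. Each is a finite set of edges, so its cost is a finite sum of triples.

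Only finitely many comparisons arise. These are: assignments of equal load (deciding each $A_d$), the price equalities defining $a^{(d)}$ and the resulting tight graphs and $F_d$ (needed for $T_d$), tours against tours, and signs of alternating sums. Every such comparison has the form ``is a given signed sum of triples positive, zero, or negative'', and each signed sum has integer coefficients bounded by $n^2$.

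The key step is the choice of constants. Let $\Delta_1$ be the smallest positive value of $|\sum \pm c_2|$ over these finitely many signed sums, and $B_2,B_3$ bounds on $|\sum\pm c_2|$ and $|\sum\pm c_3|$. Choose $\varepsilon<\Delta_1/(2B_3)$ and $M>2(B_2+\varepsilon B_3)$. Then the sign of the real sum equals the lexicographic sign: the first coordinate is an integer and dominates when nonzero; otherwise the second dominates when nonzero; otherwise the third decides. The third coordinate is nonzero on every alternating cycle, since distinct powers of two have a nonzero signed sum. Hence the real matrix is generic. Every $A_d$, every $a^{(d)}$ sign pattern, every tight graph $F_d$, and therefore every forest~$T_d$ coincide with the symbolic ones, and so do the minimum-cost tours. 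The prices $a^{(d)}$ themselves are differences of assignment costs, so their real values change, but every tightness relation that Lemma~\ref{lem:least} and the construction of~$T_d$ use is a sign of such a difference and is therefore preserved.

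The main obstacle is making sure that every object the structural proofs quantify over reduces to a comparison among finitely many fixed edge sets. This matters especially for Theorem~\ref{thm:exposure}, whose proof uses arbitrary real prices. The fix is to apply Theorem~\ref{thm:exposure} and Proposition~\ref{prop:enumeration} directly to the real realization $\widetilde w^{\varepsilon}$. There they hold as proved, because they only need genericity, which we have established. The conclusions are combinatorial (which $A_d$ a tour contains, which sets are generated), so they transfer to the symbolic setting through the preserved comparisons. The final sentence of the lemma is immediate: the algorithm only compares triples lexicographically, and $M,\varepsilon$ appear only in the analysis.
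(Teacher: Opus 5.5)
Your proposal is correct and follows the paper's proof. The paper substitutes $x_0+\varepsilon x_1+\varepsilon^2x_2$, which after scaling by $1/\varepsilon$ is your $Mc_1+c_2+\varepsilon c_3$ with $M=1/\varepsilon$. It then argues as you do: the finitely many comparisons among assignment costs and tour costs, together with the signs of alternating sums, are preserved, so canonical assignments, the forests $T_d$, and optimal tours are unchanged, and the third coordinate gives genericity.

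Your explicit thresholds add detail the paper leaves implicit. Applying Theorem~\ref{thm:exposure} to the real realization and transferring its combinatorial conclusion back is exactly how the paper invokes the lemma in the proof of Theorem~\ref{thm:main}. The detour through prices and tight graphs is harmless but unnecessary, since $T_d$ in \eqref{eq:color-forest} is determined by canonical assignments alone.
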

\begin{proof}
Replace each triple $(x_0,x_1,x_2)$ by
$x_0+\varepsilon x_1+\varepsilon^2x_2$. For sufficiently small
$\varepsilon>0$, this preserves all comparisons among the finitely
many assignment and tour costs. It also preserves genericity: every
alternating-cycle sum has a nonzero third coordinate. Thus the
canonical assignments and optimal tours are unchanged, and so are
the coloring forests defined by~\eqref{eq:color-forest}.
\end{proof}

The algorithm works directly with triples, using coordinatewise addition
and subtraction and lexicographic comparison. The integer coordinates
have polynomial bit length; only the middle coordinate involves input
reals. To implement subtraction using additions and comparisons, store
each signed expression as a difference of sums $p-q$: addition is
componentwise, negation swaps the sums, and $p-q\le p'-q'$ iff
$p+q'\le p'+q$. Each matching computation uses polynomially many such
operations, and the dynamic program uses only sums of original costs.
For rational inputs all these operations have polynomial bit complexity.

\begin{proof}[Proof of Theorem~\ref{thm:main}]
Run Algorithm~\ref{alg:tsp} with $|R|\ge\alpha n$. If $k>m$, no tour
exists; otherwise $\alpha n\le k\le n/2$. Whenever an original tour
exists, a symbolic optimum is an original optimum. Lemma~\ref{lem:realization}
and the block argument of Proposition~\ref{prop:generic} show that the
dynamic program admits such a tour. Every returned tour uses original
edges, so the answer is correct.

The time is $O^*(N_{m,k})\le O^*(\gamma(\alpha)^n)$ by
Proposition~\ref{prop:count} and monotonicity. The optional preprocessing and
arithmetic discussion above give the remaining guarantees.
\end{proof}

\paragraph{AI use.} The Odin Automatic AI Research Agent was used to find the initial proof. The final content and proofs in this paper were written and verified by the authors.


\end{document}